\newtheorem{theorem}{Theorem}[section]
\newtheorem{lemma}[theorem]{Lemma}
\newtheorem{remark}[theorem]{Remark}
\newtheorem{corollary}[theorem]{Corollary}
\newenvironment{proof}[1][Proof]{\begin{trivlist}
\item[\hskip \labelsep {\bfseries #1}]}{\end{trivlist}}
\newcommand{\om}{\Omega}
\newcommand{\dc}{\|}
\newcommand{\iom}{\int_\om}
\newcommand{\nt}{(0,T)}
\newcommand{\omnt}{\om\times\nt}
\newcommand{\unu}{{\bfu}_0}
\newcommand{\bfx}{\mbox{\boldmath{$x$}}}
\newcommand{\bfn}{\mbox{\boldmath{$n$}}}
\newcommand{\bfe}{\mbox{\boldmath{$e$}}}
\newcommand{\bff}{\mbox{\boldmath{$f$}}}
\newcommand{\bfu}{\mbox{\boldmath{$u$}}}
\newcommand{\bfw}{\mbox{\boldmath{$w$}}}
\newcommand{\bfv}{\mbox{\boldmath{$v$}}}
\newcommand{\bfz}{\mbox{\boldmath{$z$}}}
\newcommand{\bfU}{\mbox{\boldmath{$U$}}}
\newcommand{\bfphi}{\mbox{\boldmath{$\phi$}}}
\newcommand{\bfzero}{{\bf 0}}
\newcounter{constants}
\begin{document}

\title{Strong Solutions to Non--Stationary Channel Flows of
Heat--Conducting Viscous Incompressible Fluids with Dissipative
Heating}

\author{
{Michal Bene\v{s}\footnote{This research was supported by the
project GA\v{C}R P201/10/P396. Additional support from the Ministry
of Education, Youth and Sports of the Czech Republic, project No.
1M0579, within activities of the CIDEAS research centre is greatly
acknowledged.}}
\bigskip
\\
{\small Department of Mathematics}
\\
{\small Centre for Integrated Design of Advanced Structures}
\\
{\small Faculty of Civil Engineering}
\\
{\small Czech Technical University in Prague}
\\
{\small Th\'{a}kurova 7, 166 29 Prague 6, Czech Republic}
\\
e-mail:  benes@mat.fsv.cvut.cz}

\date{}

\maketitle

\begin{abstract}
We study an initial-boundary-value problem for time-dependent flows
of heat-conducting viscous incompressible fluids in channel-like
domains on a time interval $(0,T)$. For the parabolic system with
strong nonlinearities and including the artificial (the so called
``do nothing'') boundary conditions, we prove the local in time
existence, global uniqueness and smoothness of the solution on a
time interval $(0,T^*)$, where $0< T^* \leq T$.
\end{abstract}


\section{Introduction}
\subsection{Preliminaries}

Let $\Omega \in \mathcal{C}^{0,1}$ be a two-dimensional bounded
domain with the boundary $\partial\Omega$.  Let $\partial\Omega =
\overline{\Gamma}_D \cup \overline{\Gamma}_N$ be such that
$\Gamma_D$ and $\Gamma_N$ are open, not necessarily connected, the
one-dimensional measure of $\Gamma_D \cap \Gamma_N$ is zero and
$\Gamma_D\neq\emptyset$ ($\Gamma_N=\bigcup_i^m \Gamma^{(i)}_N$,
$\overline{\Gamma}^{(i)}_N\cap\overline{\Gamma}^{(j)}_N=\emptyset$
for $i\neq j$). In a physical sense, $\Omega$ represents a
``truncated'' region of an unbounded channel system occupied by a
moving heat--conducting viscous incompressible fluid. $\Gamma_D$
will denote the ``lateral'' surface and $\Gamma_N$ represents the
open parts of the region $\Omega$. We assume that in/outflow channel
segments extend as straight pipes. All portions of $\Gamma_N$ are
taken to be flat and the boundary $\Gamma_N$ and rigid boundary
$\Gamma_D$ form a right angle at each point where the boundary
conditions change. Moreover, we assume that $\Gamma_D$ is smooth (of
class $\mathcal{C}^{\infty}$).

The flow of a viscous incompressible heat--conducting
constant--property fluid is governed by balance equations for linear
momentum, mass and internal energy \cite{diaz2007}
\begin{align}
\varrho\left(\bfu_t + (\bfu\cdot\nabla)\bfu \right) - \nu\Delta \bfu
+ \nabla \pi &= \varrho(1-\alpha\theta)\bff, && \label{momentum}
\\
{\rm div}\,\bfu &= 0, && \label{mass}
\\
c_p\varrho\left( \theta_t + \bfu\cdot\nabla\theta \right) - \kappa
\Delta \theta  -\nu \bfe(\bfu):\bfe(\bfu) &= \varrho\alpha
\theta\bff\cdot\bfu + h. && \label{energy}
\end{align}
Here $\bfu=(u_1,u_2)$, $\pi$ and $\theta$ denote the unknown
velocity, pressure and temperature, respectively. Tensor
$\bfe(\bfu)$ denotes the symmetric part of the velocity gradient.
Data of the problem are as follows: $\bff$ is a body force and $h$ a
heat source term. Positive constant material coefficients represent
the kinematic viscosity $\nu$, density $\varrho$, heat conductivity
$\kappa$, specific heat at constant pressure $c_p$ and thermal
expansion coefficient of the fluid $\alpha$. The energy balance
equation \eqref{energy} takes into account the phenomena of the
viscous energy dissipation and adiabatic heat effects. For rigorous
derivation of the model like \eqref{momentum}--\eqref{energy} we
refer the readers to \cite{Kagei2000}.

Concerning the boundary conditions of the flow, it is a standard
situation to prescribe the non-homogeneous Dirichlet boundary
condition for temperature and homogeneous no-slip boundary condition
for velocity of the fluid on the fixed walls of the channel. In the
case of temperature outflow boundary condition on $\Gamma_N$, we
accept a frequently used assumption as ``zero flux density'', which
is equivalent to the condition $\nabla\theta\cdot \bfn =0$,
sometimes referred to as a ``do nothing'' (or ``natural'') boundary
condition and rather used in numerical simulations (cf.
\cite{SaniGresho}). However, it is really not clear which boundary
condition for $\bfu$ should be prescribed on $\Gamma_N$. The
boundary condition
\begin{equation}\label{neumann}
-\pi \bfn_i+\nu\frac{\partial \bfu}{\partial \bfn_i}= F_i \bfn_i
\end{equation}
prescribed on $\Gamma^{(i)}_N$ is again often called ``do nothing''
(or ``free outflow'') boundary condition (cf. \cite{GaRaRoTu},
\cite{Gresho}, \cite{Hey}). Here $F_i$ are given functions of space
and time and $\bfn_i$ is the outer unit normal to $\Gamma^{(i)}_N$,
$i=1,\dots,m$. The boundary condition \eqref{neumann} results from a
variational principle and does not have a real physical meaning but
is rather used in truncating large physical domains. It has been
proven to be convenient in numerical modeling of parallel flows. For
more information about application of this boundary condition and
the physical meaning of the quantities $F_i$ we refer to
\cite{GaRaRoTu}.
\begin{remark}
Assume that $F_i$ are given smooth functions on $\Gamma^{(i)}_N$,
$i=1,\dots,m$, and consider the smooth extension $F$ such that
$F(\bfx,t)\big|_{\Gamma^{(i)}_N}\equiv F_i(\bfx,t)$. Introducing the
new variable $\mathcal{P}=\pi+F$ this amounts to solving the
homogeneous ``do nothing'' boundary condition transferring the data
from the right-hand side of \eqref{neumann} to the right-hand side
of the linear momentum balance equation. Hence we will assume
throughout this paper, without loss of generality, that $F_i\equiv
0$ in \eqref{neumann}, $i=1,\dots,m$.
\end{remark}
\begin{remark}
To simplify the notation in the whole paper, we normalize material
constants $\varrho$, $\nu$, $\kappa$, $\alpha$ and $c_p$ to one.
\end{remark}
The paper is organized as follows. In Subsection
\ref{notation_spaces}, we introduce basic notations and some
appropriate function spaces in order to precisely formulate our
problem. In Section \ref{strong_formulation}, we present the strong
form of the model for the non-stationary motion of viscous
incompressible heat-conducting fluids in a channel considered in our
work, impose compatibility conditions on initial data, specify our
smoothness assumptions on data and formulate the problem in a
variational setting. The main result of our work is established at
the end of Section \ref{strong_formulation}. In
Section~\ref{Auxiliary results}, we present basic results on the
existence, uniqueness and energy estimates of the solution to an
auxiliary linearized problems, the decoupled initial-boundary value
problems for the Stokes and heat equations. The main result, stated
in Section \ref{strong_formulation}, is proved in Section~\ref{sec
proof_main} via the Banach contraction principle. In the proof of
local in time existence, presented in Subsection
\ref{proof_existence}, we rely on the energy estimates for linear
problems, regularity of stationary solutions and interpolations-like
inequalities. The global in time uniqueness of the strong solution
is proved in Subsection \ref{sec_uniqueness} using the technique of
Gronwall lemma.
\begin{remark}
Let us note that our results can be extended to the problem if we
consider the so called ``free surface'' boundary condition on
$\Gamma_N$ and replace \eqref{neumann} by
$$-\pi\bfn+\nu[\nabla\bfu+(\nabla\bfu)^{\top}]\bfn={\bf0}.$$
However, to ensure the smoothness of the solution and exclude
boundary singularities near the points where the boundary conditions
change their type some additional requirements on the geometry of
the domain need to be introduced. This means that $\Gamma_N$ and
$\Gamma_D$ form an angle $\omega<\pi/4$ at each point where the
boundary conditions change (see \cite{OrSan}).
\end{remark}


\subsection{Basic notation and some function
spaces}\label{notation_spaces}
Vector functions and operators acting on vector functions are
denoted by~boldface letters. Unless specified otherwise, we use
Einstein's summation convention for indices running from $1$ to $2$.
Throughout the paper, we will always use positive constants $c$,
$c_1$, $c_2$, $\dots$, which are not specified and which may differ
from line to line.

For an arbitrary $r\in [1,+\infty]$, $L^r(\Omega)$ denotes the usual
Lebesgue space equipped with the norm $\|\cdot\|_{L^r(\Omega)}$, and
$W^{k,p}(\Omega)$, $k\geq 0$ ($k$ need not to be an integer, see
\cite{KufFucJoh1977}), $p \in [1,+\infty]$, denotes the usual
Sobolev space with the norm $\|\cdot\|_{W^{k,p}(\Omega)}$. For
simplicity we denote shortly $\mathbf{W}^{k,p}\equiv
W^{k,p}(\Omega)^2$ and $\mathbf{L}^{r}\equiv L^{r}(\Omega)^2$.

To simplify mathematical formulations we introduce the following
notations:
\begin{align}
a_u(\bfu,\bfv) & :=   \iom \frac{\partial u_i}{\partial x_j}
\frac{\partial v_i}{\partial x_j}\,{\rm d}(\Omega),
\label{form_a} \\
b(\bfu,\bfv,\bfw)& := \iom u_j{\frac{\partial v_i}{\partial x_j}}
w_i\,{\rm d}(\Omega),\label{form_b}
 \\
a_{\theta}(\theta,\varphi) & :=  \iom \nabla \theta \cdot \nabla
\varphi \,{\rm d}(\Omega),\label{form_c}
 \\
d(\bfu,\theta,\varphi)& := \iom  \bfu\cdot\nabla\theta
 \,\varphi \,{\rm d}(\Omega),
 \label{form_d}
 \\
e(\bfu,\bfv,\varphi) & :=  \iom  e_{ij}(\bfu) e_{ij}(\bfv) \varphi
\,{\rm d}(\Omega), \label{form_e}
 \\
(\bfu,\bfv) &:=  \iom \bfu \cdot \bfv \,{\rm d}(\Omega),
 \label{scalar_Lu}
\\
(\theta,\varphi)_{\Omega} & :=  \iom \theta \varphi \,{\rm
d}(\Omega). \label{scalar_Lt}
\end{align}
In \eqref{form_a}--\eqref{scalar_Lt} all functions
$\bfu,\bfv,\bfw,\theta,\varphi$ are smooth enough, such that all
integrals on the right-hand sides make sense. In \eqref{form_e}
$e_{ij}(\bfu)$ denotes the components of the tensor $\bfe(\bfu)$
defined by
\begin{displaymath}
e_{ij}(\bfu)=\frac{1}{2}\left(\frac{\partial u_i}{\partial
x_j}+\frac{\partial u_j}{\partial x_i}\right), \qquad i,j=1,2.
\end{displaymath}
Let
\begin{equation}
\mathcal{E}_u := \left\{\bfu\in C^\infty(\overline{\Omega})^2; \,
\textmd{div}\,\bfu = 0, \, {\textmd{supp}\, \bfu}  \cap \Gamma_D =
\emptyset  \right\}
\end{equation}
and
\begin{equation}
\mathcal{E}_{\theta} := \left\{\theta \in
C^\infty(\overline{\Omega}); \, {\textmd{supp}\, \theta }  \cap
\Gamma_D = \emptyset\right\}
\end{equation}
and $\mathbf{V}_u^{k,p}$ be the closure of $\mathcal{E}_u$ in the
norm of $W^{k,p}(\om)^2$, $k\ge 0$ and $1\leq p \leq \infty$.
Similarly, let $V_{\theta}^{k,p}$ be a closure of
$\mathcal{E}_{\theta}$ in the norm of $W^{k,p}(\om)$. Then
$\mathbf{V}_{u}^{k,p}$ and $V_{\theta}^{k,p}$, respectively, are
Banach spaces with the norms of the spaces $W^{k,p}(\om)^2$ and
$W^{k,p}(\om)$, respectively. Note, that $\mathbf{V}_{u}^{1,2}$,
 $V_{\theta}^{1,2}$, $\mathbf{V}_{u}^{0,2}$ and $V_{\theta}^{0,2}$,
respectively, are Hilbert spaces with scalar products
\eqref{form_a}, \eqref{form_c}, \eqref{scalar_Lu} and
\eqref{scalar_Lt}, respectively.

Further, define the spaces
\begin{equation}
\mathcal{D}_u := \left\{\bfu \; |\; \bff \in \mathbf{V}_{u}^{0,2},\,
a_u(\bfu,\bfv) =(\bff,\bfv)  \textmd{ for all } \bfv\in
\mathbf{V}_{u}^{1,2} \right\} \label{D_u}
\end{equation}
and
\begin{equation}
\mathcal{D}_{\theta} :=  \left\{\theta \; |\; h \in
V_{\theta}^{0,2}, \, a_{\theta}(\theta,\varphi)
=(h,\varphi)_{\Omega} \textmd{ for all } \varphi \in
V_{\theta}^{1,2} \right\}\label{D_t},
\end{equation}
equipped with the norms
\begin{equation}\label{norm_D_u}
\|\bfu\|_{\mathcal{D}_u} := \|\bff\|_{\mathbf{V}_{u}^{0,2}}
\end{equation}
and
\begin{equation}\label{norm_D_t}
\|\theta\|_{\mathcal{D}_{\theta}} := \|h\|_{{V}_{\theta}^{0,2}},
\end{equation}
where $\bfu$ and $\bff$ are corresponding functions via \eqref{D_u}.
Similarly, $\theta$ and $h$ are corresponding functions via
\eqref{D_t}.

The key embeddings $\mathcal{D}_u\hookrightarrow \mathbf{W}^{2,2}$
and $\mathcal{D}_{\theta}\hookrightarrow {W}^{2,2}(\Omega)$ are
consequences of assumptions setting on the domain $\Omega$ and the
regularity results for the steady Stokes system in channel-like
domains with ``do-nothing'' condition, see \cite[Remark 2.2 and
Corollary 2.3]{Benes2009} and the ``classical'' regularity results
for the stationary linear heat equation (the Poisson equation) with
the mixed boundary conditions, see for instance \cite{KufSan}.


\section{Formulation of the problem and the main result}
\label{strong_formulation}
Let $T\in(0,\infty)$ be fixed throughout the paper and $Q_T =
\omnt$, $\Gamma_{DT} = \Gamma_D\times(0,T)$ and
$\Gamma_{NT}=\Gamma_N\times(0,T)$. The strong formulation of our
problem is as follows:
\begin{align}
\bfu_t + (\bfu\cdot\nabla)\bfu  - \Delta \bfu + \nabla
\mathcal{P}+\theta\bff &=\bff &&\textmd{in}\; Q_T, \label{eq3}
\\
{\rm div}\,\bfu &= 0 &&\textmd{in}\;Q_T, \label{eq5}
\\
 \theta_t + \bfu\cdot\nabla\theta  -
\Delta \theta  - \bfe(\bfu):\bfe(\bfu) &= \theta\bff\cdot\bfu + h &&
\textmd{in}\;Q_T, \label{heat equation}
\\
 \bfu &= {\bf 0}  &&\textmd{on}\;\Gamma_{DT},
\label{eq6}
\\
\theta&=g  &&\textmd{on}\;\Gamma_{DT},\label{boundary temperature}
\\
-\mathcal{P}\bfn+\frac{\partial\bfu}{\partial\bfn}&= {\bf0}
&&\textmd{on}\;\Gamma_{NT}, \label{eq7}
\\
\frac{\partial\theta}{\partial\bfn} &= 0 &&
\textmd{on}\;\Gamma_{NT}, \label{boundary temperature2}
\\
\bfu(0) &= \bfu_0 &&\textmd{in}\;\om, \label{eq8}
\\
\theta(0) &= \theta_0 &&\textmd{in}\;\om. \label{init_temp}
\end{align}
Here $g$ is a given function representing the distribution of the
temperature $\theta$ on $\Gamma_D$, $\bfu_0$ and  $\theta_0$
describe the initial velocity and temperature, respectively. Here we
suppose that all functions in \eqref{eq3}--\eqref{init_temp} are
smooth enough and satisfy the compatibility conditions $\bfu_0=\bf0$
and $\theta_0=g $ on $\Gamma_D$.
\begin{remark}
Throughout the paper, $\mu$ denotes some fixed (arbitrarily small)
positive real number (cf. \eqref{assump_f}).
\end{remark}
At this point we can formulate our problem. Suppose that
\begin{align}
& \bff \in L^{2+\mu}(0,T;\mathbf{V}_{u}^{0,2}), \quad h \in
L^2(0,T;V_{\theta}^{0,2}), \label{assump_f}
\\
& g\in { L^{2}}(0,T;W^{2,2}(\Omega)),\quad  g_t \in
L^2(0,T;L^2(\Omega)), \nonumber
\\
& \bfu_0 \in \mathcal{D}_u, \quad \theta_0 \in W^{2,2}(\Omega),\quad
\theta_0 - g(0) \in \mathcal{D}_{\theta}. \nonumber
\end{align}
Find a pair $[\bfu,\theta]$ such that
\begin{align}
& \bfu_t \in L^2( 0,T;\mathbf{V}_{u}^{0,2} ), \quad \bfu \in
L^2(0,T;\mathcal{D}_u) \cap L^{\infty}(0,T; \mathbf{V}_{u}^{1,2}),
\\
& \theta_t-g_t \in L^2( 0,T;V_{\theta}^{0,2} ), \quad  \theta - g
\in L^2(0,T;\mathcal{D}_{\theta}) \cap
L^{\infty}(0,T;V_{\theta}^{1,2})
\end{align}
and the following system
\begin{eqnarray}
(\bfu_t,\bfv) + a_u(\bfu,\bfv) + b(\bfu,\bfu,\bfv)+( \theta\bff,\bfv
)
 &=& ( \bff,\bfv ), \label{weak_NS}
 \\
(\theta_t,\varphi)_{\Omega}+ a_{\theta}(\theta,\varphi)+
d(\bfu,\theta,\varphi)-e(\bfu,\bfu,\varphi)
-(\theta\bff\cdot\bfu,\varphi)_{\Omega}&=&(h,\varphi)_{\Omega}\quad
\label{weak_heat}
\end{eqnarray}
holds for every $[\bfv,\varphi] \in \mathbf{V}_{u}^{1,2} \times
V_{\theta}^{1,2}$ and for almost every $t\in(0,T)$ and
\begin{align}
\qquad \qquad \qquad \bfu(0) & =  \bfu_0 && \textmd{ in } \Omega, &
\label{weak_init_vel}
\\
\theta(0) & =  \theta_0 && \textmd{ in } \Omega.  &
\label{weak_init_temp}
\end{align}
The pair $[\bfu,\theta]$ is called the strong solution to the system
\eqref{eq3}--\eqref{init_temp}.

Let us briefly describe some difficulties we have to solve in our
work. The equations  \eqref{eq3}--\eqref{heat equation} represent
the system with strong nonlinearities (quadratic growth of
$\nabla\bfu$ in dissipative term $\bfe(\bfu):\bfe(\bfu)$) without
appropriate general existence and regularity theory. In
\cite{frehse}, Frehse presented a simple example of discontinuous
bounded weak solution $\bfU\in L^{\infty}\cap H^1$ of nonlinear
elliptic system of the type $\Delta\bfU=B(\bfU,\nabla\bfU)$, where
$B$ is analytic and has quadratic growth in $\nabla\bfU$. However,
for scalar problems, such existence and regularity theory is well
developed (cf. \cite{LadUr,LadSolUr}).

Nevertheless, the main (open) problem of the system
\eqref{eq3}--\eqref{init_temp} consists in the fact that, because of
the boundary condition \eqref{eq7}, we cannot prove that
$b(\bfu,\bfu,\bfu)=0$. Consequently, we are not able to show that
the kinetic energy of the fluid is controlled by the data of the
problem and the solutions of \eqref{eq3}--\eqref{init_temp} need not
satisfy the energy inequality. This is due to the fact that some
uncontrolled ``backward flow'' can take place at the open parts
$\Gamma_N$ of the domain $\Omega$ and one is not able to prove
global (in time) existence results. In
\cite{Kracmar2002}--\cite{KraNeu5}, Kra\v cmar and Neustupa
prescribed an additional condition on the output (which bounds the
kinetic energy of the backward flow) and formulated steady and
evolutionary Navier--Stokes problems by means of appropriate
variational inequalities. In \cite{KuceraSkalak1998}, Ku\v cera and
Skal\' ak proved the local--in--time existence and uniqueness of a
``weak'' solution of the evolution Navier--Stokes problem for
iso-thermal fluids, such that
\begin{equation}\label{reg_kucera_skalak}
\bfu_t\in L^2(0,T^*; \mathbf{V}_{u}^{1,2}),\quad \bfu_{tt}\in
L^2(0,T^*;(\mathbf{V}_{u}^{1,2})^*), \quad 0<T^*\leq T,
\end{equation}
under some smoothness restrictions on $\bfu_0$ and $\mathcal{P}$. In
\cite{SkalakKucera2000}, the same authors established the similar
results for the evolution Boussinesq approximations of the
heat--conducting incompressible fluids. In \cite{Kucera2009},
Ku\v{c}era supposed that the ``do nothing'' problem for the
Navier--Stokes system is solvable in suitable function class with
some given data (the initial velocity and the right hand side). The
author proved that there exists a unique solution for data which are
small perturbations of the original ones.

In the present paper, we extend the results by Skal\' ak and Ku\v
cera in \cite{SkalakKucera2000}. However, our results are restricted
to two dimensions. We shall prove local existence and global
uniqueness of the strong solution to \eqref{eq3}--\eqref{init_temp},
i.e. for more general model than the Boussinesq approximations
considered in cited references, and, moreover, such that i.a.
$[\bfu,\theta]\in L^2(0,T^*;\mathbf{W}^{2,2})\times
L^2(0,T^*;{W}^{2,2})$, which is strong in the sense that the
solution possess second spatial derivatives. The main result of the
paper is the following
\begin{theorem}[Main result]\label{theorem_main_result}
Assume
\begin{align}
& \bff \in L^{2+\mu}(0,T;\mathbf{V}_{u}^{0,2}),  \quad h \in
L^2(0,T; V_{\theta}^{0,2}),
\\
& g\in {L^{2}}(0,T;W^{2,2}(\Omega)),\quad  g_t \in
L^2(0,T;L^2(\Omega)), \nonumber
\\
& \bfu_0 \in \mathcal{D}_u, \quad \theta_0 \in W^{2,2}(\Omega),\quad
\theta_0 - g(0) \in \mathcal{D}_{\theta}.\nonumber
\end{align}
Then there exists $T^*\in(0,T]$ and the pair $[\bfu,\theta]$,
\begin{align}
& \bfu_t \in L^2( 0,T;\mathbf{V}_{u}^{0,2} ), \quad \bfu \in
L^2(0,T;\mathcal{D}_u) \cap L^{\infty}(0,T; \mathbf{V}_{u}^{1,2}),
\\
& \theta_t-g_t \in L^2( 0,T; V_{\theta}^{0,2} ), \quad \theta-g \in
L^2(0,T;\mathcal{D}_{\theta}) \cap L^{\infty}(0,T;
V_{\theta}^{1,2}),
\end{align}
such that $[\bfu,\theta]$ is the strong solution of the system
\eqref{eq3}--\eqref{init_temp}. This solution is also globally
unique.
\end{theorem}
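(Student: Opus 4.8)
The plan is to obtain the pair $[\bfu,\theta]$ as a fixed point of a solution operator built from the decoupled linear problems, and then to invoke the Banach contraction principle on a sufficiently short interval $(0,\ts)$. Using the linear existence, uniqueness and energy estimates for the decoupled Stokes and heat problems (to be established in Section~\ref{Auxiliary results}), I would, for a given pair $[\bfw,\eta]$, let $\bfu$ solve the linear Stokes system
\begin{align}
(\bfu_t,\bfv)+a_u(\bfu,\bfv) &= (\bff,\bfv)-b(\bfw,\bfw,\bfv)-(\eta\bff,\bfv), \qquad \bfv\in\mathbf{V}_{u}^{1,2}, \nonumber
\end{align}
with $\bfu(0)=\bfu_0$, and let $\theta$ solve the linear heat equation
\begin{align}
(\theta_t,\varphi)_{\Omega}+a_{\theta}(\theta,\varphi) &= (h,\varphi)_{\Omega}+e(\bfw,\bfw,\varphi)-d(\bfw,\eta,\varphi)+(\eta\bff\cdot\bfw,\varphi)_{\Omega}, \qquad \varphi\in V_{\theta}^{1,2}, \nonumber
\end{align}
with $\theta(0)=\theta_0$ and the Dirichlet datum $g$ on $\Gamma_D$. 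Write $\mathcal{T}[\bfw,\eta]=[\bfu,\theta]$; a fixed point of $\mathcal{T}$ is precisely a strong solution of \eqref{weak_NS}--\eqref{weak_init_temp}. By the embeddings $\mathcal{D}_u\hookrightarrow\mathbf{W}^{2,2}$ and $\mathcal{D}_{\theta}\hookrightarrow W^{2,2}(\Omega)$, the linear estimates control $\mathcal{T}[\bfw,\eta]$ in the norm of $X_{\ts}:=\{L^2(0,\ts;\mathcal{D}_u)\cap L^\infty(0,\ts;\mathbf{V}_{u}^{1,2})\}\times\{g+L^2(0,\ts;\mathcal{D}_{\theta})\cap L^\infty(0,\ts;V_{\theta}^{1,2})\}$.

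For the self-mapping property I would estimate the nonlinear sources by two-dimensional interpolation. The delicate one is the dissipation $e(\bfw,\bfw,\cdot)$: using $\|\nabla\bfw\|_{L^4(\Omega)}^2\le c\,\|\nabla\bfw\|_{L^2(\Omega)}\|\bfw\|_{\mathbf{W}^{2,2}}$ (valid in two dimensions) gives
\begin{align}
\int_0^{\ts}\|\bfe(\bfw):\bfe(\bfw)\|_{L^2(\Omega)}^2\,{\rm d}t &\le c\,\|\bfw\|_{L^\infty(0,\ts;\mathbf{V}_{u}^{1,2})}^2\,\|\bfw\|_{L^2(0,\ts;\mathcal{D}_u)}^2, \nonumber
\end{align}
so this term is bounded but \emph{not} small in $\ts$. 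The convective terms $b(\bfw,\bfw,\cdot)$ and $d(\bfw,\eta,\cdot)$, by contrast, carry a genuine power of $\ts$ once one factor is placed in $L^1(0,\ts)$ via H\"older, and the buoyancy-type terms $(\eta\bff,\cdot)$, $(\eta\bff\cdot\bfw,\cdot)$ together with $(\bff,\cdot)$ are made small in $\ts$ precisely by the extra time integrability $\bff\in L^{2+\mu}$, $\mu>0$, through $\|\bff\|_{L^2(0,\ts;\mathbf{V}_{u}^{0,2})}\le \ts^{\mu/(2(2+\mu))}\|\bff\|_{L^{2+\mu}(0,\ts;\mathbf{V}_{u}^{0,2})}$. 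Choosing the velocity radius $R_u$ of order $\|\bfu_0\|_{\mathcal{D}_u}$ and the temperature radius $R_\theta$ of order $R_u^2+\|\theta_0-g(0)\|_{\mathcal{D}_{\theta}}$ (so that the non-small dissipation bound is absorbed), all remaining contributions are $O(\ts^\alpha)$ for some $\alpha>0$, and $\mathcal{T}$ maps the corresponding ball of $X_{\ts}$ into itself once $\ts$ is small.

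The main obstacle is exactly the dissipation term, which is not small in $\ts$ and whose difference $\bfe(\bfw_1):\bfe(\bfw_1)-\bfe(\bfw_2):\bfe(\bfw_2)=\bfe(\bfw_1+\bfw_2):\bfe(\bfw_1-\bfw_2)$ contributes, by the same interpolation, a factor proportional to $R_u$ with no compensating power of $\ts$. The resolution I would use is that this term lives only in the temperature equation, and the temperature difference re-enters the velocity equation solely through the buoyancy term $(\theta\bff,\cdot)$; since $\bff\in L^{2+\mu}$, that coupling again carries the factor $\ts^{\mu/(2(2+\mu))}$. Hence, viewing $\theta$ as slaved to $\bfw$ and reducing $\mathcal{T}$ to a map on the velocity component alone (equivalently, working in a suitably weighted product norm), the difference of two images is bounded by $c\,\ts^{\alpha}\,C(R_u,R_\theta)\,\|\bfw_1-\bfw_2\|_{L^2(0,\ts;\mathcal{D}_u)\cap L^\infty(0,\ts;\mathbf{V}_{u}^{1,2})}$. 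Shrinking $\ts$ further makes this a contraction, and its unique fixed point is the local strong solution on $(0,\ts)$.

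For the global uniqueness I would take two strong solutions $[\bfu_1,\theta_1]$, $[\bfu_2,\theta_2]$ with identical data, subtract the equations, and test the velocity difference with $\bfu_1-\bfu_2$ and the temperature difference with $\theta_1-\theta_2$. The quadratic terms are handled by the same two-dimensional interpolation inequalities, now exploiting the available $\mathbf{W}^{2,2}$- and $W^{2,2}(\Omega)$-regularity of the solutions to bound the troublesome products — in particular the dissipation difference $(\bfe(\bfu_1+\bfu_2):\bfe(\bfu_1-\bfu_2),\theta_1-\theta_2)$ — by terms $\varepsilon\,\|\nabla(\bfu_1-\bfu_2)\|_{L^2(\Omega)}^2+\varepsilon\,\|\nabla(\theta_1-\theta_2)\|_{L^2(\Omega)}^2$ plus an integrable coefficient times $\|\bfu_1-\bfu_2\|_{L^2(\Omega)}^2+\|\theta_1-\theta_2\|_{L^2(\Omega)}^2$. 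Absorbing the $\varepsilon$-terms into the dissipation on the left and applying Gronwall's lemma then forces $\bfu_1=\bfu_2$ and $\theta_1=\theta_2$ on the entire interval of existence, which is the asserted global uniqueness; note that, unlike the existence step, this argument requires no smallness of $\ts$.
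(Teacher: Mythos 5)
Your proposal is correct in substance and shares the paper's overall skeleton: a Banach fixed point built on the decoupled linear estimates of Theorems \ref{stokes_problem}--\ref{linear_heat_problem}, two-dimensional interpolation, smallness in $\ts$ extracted from $\bff\in L^{2+\mu}$ for the buoyancy coupling, and a Gronwall argument for global uniqueness that is essentially identical to Subsection \ref{sec_uniqueness}. Where you genuinely diverge is in the treatment of the dissipative term, which you correctly identify as the crux. The paper's map $\mathcal{Z}$ is sequential (Gauss--Seidel--like): the Stokes problem \eqref{linear_problem_1} is solved first with sources built from the input $[\widetilde{\bfw},\widetilde{\vartheta}]$, and the heat problem \eqref{linear_problem_2} then carries the dissipation $e(\bfw,\bfw,\cdot)$ of the \emph{output} velocity $\bfw$; by the energy bound \eqref{eq60} this term is of size $\bigl[K_1(T)+c(R^2+R)T^{1/4}+cT^{\mu/[(2+\mu)(4+\mu)]}R\bigr]^2$, hence small as $T\to 0_+$, and the same device makes the dissipation differences \eqref{eq_contr_19}--\eqref{eq_contr_21} small in the contraction step, because there $\|\bfw_2-\bfw_1\|_{{X}^u_{T}}$ is eliminated through the linear Stokes estimate and so inherits the $T$-smallness of the input differences. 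Consequently Lemma \ref{ball_into} holds for every $R$, with no hierarchy between velocity and temperature radii and no weighting of the norm. Your map evaluates the dissipation at the \emph{input} velocity (Jacobi-like), which keeps both component problems literally linear but leaves a contraction term of size $cR_u\|\bfw_1-\bfw_2\|$ with no power of $\ts$; your compensation --- inflating the temperature radius $R_\theta\sim R_u^2+\|\theta_0-g(0)\|_{\mathcal{D}_{\theta}}$ for the self-map, and exploiting for the contraction that temperature feeds back into velocity only through $(\theta\bff,\cdot)$, which carries $\ts^{\alpha}$ thanks to $\mu>0$, via a weighted product norm --- is sound and rests on the same two structural facts the paper uses (the embedding $X^u_{T}\hookrightarrow L^4(0,T;\mathbf{W}^{1,4})$ and the extra time-integrability of $\bff$). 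Two caveats you should address when writing this up: if you take the ``slaving'' reduction literally, you need well-posedness of a heat equation with convection $d(\bfw,\theta,\cdot)$ and a zeroth-order term in the class $L^2(0,\ts;\mathcal{D}_{\theta})$, which is not covered by Theorem \ref{linear_heat_problem} (the weighted-norm variant avoids this); and since you keep $\bfu(0)=\bfu_0$, $\theta(0)=\theta_0$ and the Dirichlet datum $g$ inside the fixed-point problem, you need the linear theorems with inhomogeneous data, whereas the paper first homogenizes via the shift $[\bfw,\vartheta]=[\bfu-\unu,\theta-\theta_0-g]$ of Remark \ref{subst_hom_problem} --- a standard but necessary adjustment.
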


\section{Auxiliary results}\label{Auxiliary results}
Before we proceed to prove the main result of our paper, let us
establish the well-possedness property for appropriate linear
problems, which can be found in literature.
\begin{theorem}[The decoupled Stokes equations]\label{stokes_problem}
Let $\bff\in L^2(0,T;\mathbf{V}_{u}^{0,2})$. Then there exists the
unique function $\bfu\in L^2(0,T;\mathcal{D}_u) \cap
L^{\infty}(0,T;\mathbf{V}_{u}^{1,2})$, $\bfu_t\in
L^2(0,T;\mathbf{V}_{u}^{0,2})$, such that
\begin{equation} \label{lin_var_form_1a}
(\bfu_t,\bfv) + a_u(\bfu,\bfv) = (\bff,\bfv)
\end{equation}
holds for every $\bfv\in \mathbf{V}_{u}^{1,2}$ and for almost every
$t\in(0,T)$ and $\bfu(0) = \bfzero$. Moreover
\begin{equation}\label{eq11a}
\|\bfu_t\|_{L^2(0,T;\mathbf{V}_{u}^{0,2})}
+\|\bfu\|_{L^2(0,T;\mathcal{D}_{\theta})}
+\|\bfu\|_{L^{\infty}(0,T;\mathbf{V}_{u}^{1,2})} \leq c(\om)
\|\bff\|_{L^2(0,T;\mathbf{V}_{u}^{0,2})}.
\end{equation}
\end{theorem}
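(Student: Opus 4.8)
The plan is to construct the solution by the Faedo--Galerkin method and to obtain \eqref{eq11a} by passing two a~priori estimates to the limit. First I would record the two structural facts about the form $a_u$. Since every element of $\mathbf{V}_{u}^{1,2}$ vanishes (in the sense of traces) on $\Gamma_D$ and $\Gamma_D\neq\emptyset$, the Poincar\'e inequality gives the coercivity $a_u(\bfu,\bfu)=\|\nabla\bfu\|_{\mathbf{L}^2}^2\geq c(\om)\,\|\bfu\|_{\mathbf{V}_{u}^{1,2}}^2$, while boundedness of $a_u$ is clear; hence $a_u$ is an inner product on $\mathbf{V}_{u}^{1,2}$ equivalent to \eqref{form_a}. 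Because $\mathbf{V}_{u}^{1,2}\hookrightarrow\mathbf{V}_{u}^{0,2}$ is compact (Rellich), the self--adjoint operator $A$ associated with $a_u$ on $H:=\mathbf{V}_{u}^{0,2}$, i.e. $a_u(\bfu,\bfv)=(A\bfu,\bfv)$ with domain exactly $\mathcal{D}_u$ by \eqref{D_u}, is positive with compact resolvent. I would therefore fix the orthonormal basis $\{\bfw_k\}_{k=1}^{\infty}$ of $H$ formed by its eigenfunctions; each $\bfw_k\in\mathcal{D}_u$, and $\{\bfw_k\}$ is orthogonal also in the $a_u$--product.

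Next I set up the approximations $\bfu_n(t)=\sum_{k=1}^n c_k^n(t)\bfw_k$ by
$$ (\bfu_n',\bfw_k)+a_u(\bfu_n,\bfw_k)=(\bff,\bfw_k),\qquad k=1,\dots,n,\quad \bfu_n(0)=\bfzero. $$
This is a linear ODE system with right--hand side in $L^2(0,T)$, so a unique absolutely continuous $c^n$ exists on $[0,T]$. Multiplying by $c_k^n$ and summing yields $\frac{1}{2}\frac{d}{dt}\|\bfu_n\|_H^2+a_u(\bfu_n,\bfu_n)=(\bff,\bfu_n)$; coercivity together with the Cauchy--Schwarz and Young inequalities and integration in time bound $\bfu_n$ uniformly in $L^{\infty}(0,T;H)\cap L^2(0,T;\mathbf{V}_{u}^{1,2})$.

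The higher estimate is the decisive step. Multiplying the Galerkin identity by $(c_k^n)'$ and summing, and using that $\bfu_n'$ lies in $\mathrm{span}\{\bfw_1,\dots,\bfw_n\}$, I obtain
$$ \|\bfu_n'\|_H^2+\frac{1}{2}\frac{d}{dt}\,a_u(\bfu_n,\bfu_n)=(\bff,\bfu_n'). $$
Estimating $(\bff,\bfu_n')\leq\frac{1}{2}\|\bff\|_H^2+\frac{1}{2}\|\bfu_n'\|_H^2$ and integrating (recall $\bfu_n(0)=\bfzero$) bounds $\bfu_n'$ in $L^2(0,T;H)$ and $\bfu_n$ in $L^{\infty}(0,T;\mathbf{V}_{u}^{1,2})$ by $\|\bff\|_{L^2(0,T;H)}$. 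Finally, the projected equation reads $A\bfu_n=P_n\bff-\bfu_n'$ in $H$, so $A\bfu_n$ is bounded in $L^2(0,T;H)$; by the definition \eqref{D_u}--\eqref{norm_D_u} of $\mathcal{D}_u$ this is precisely a uniform bound of $\bfu_n$ in $L^2(0,T;\mathcal{D}_u)$.

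With these bounds, weak and weak--$*$ compactness extract a subsequence and a limit $\bfu$ in the asserted spaces; since the problem is linear, passing to the limit in the identity tested against a fixed $\bfw_k$, and then using density of $\bigcup_n\mathrm{span}\{\bfw_1,\dots,\bfw_n\}$ in $\mathbf{V}_{u}^{1,2}$, gives \eqref{lin_var_form_1a}. The regularity $\bfu\in L^2(0,T;\mathcal{D}_u)$, $\bfu_t\in L^2(0,T;H)$ yields $\bfu\in C([0,T];\mathbf{V}_{u}^{1,2})$ by the Lions--Magenes interpolation, so $\bfu(0)=\bfzero$ is meaningful and follows from $\bfu_n(0)=\bfzero$, and \eqref{eq11a} is recovered from the uniform bounds by weak lower semicontinuity of the norms. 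Uniqueness is immediate: the difference $\bfw$ of two solutions solves the homogeneous problem with $\bfw(0)=\bfzero$, and testing with $\bfw$ together with coercivity gives $\frac{d}{dt}\|\bfw\|_H^2\leq 0$, whence $\bfw\equiv\bfzero$ by Gronwall's lemma. I expect the only genuinely nontrivial point to be the higher (maximal--parabolic) estimate and the identification $\bfu\in L^2(0,T;\mathcal{D}_u)$: the latter rests on the abstract characterization of $\mathcal{D}_u$ combined with the elliptic regularity $\mathcal{D}_u\hookrightarrow\mathbf{W}^{2,2}$ for the mixed ``do nothing'' Stokes problem recalled from \cite{Benes2009}, which is what renders the nonstandard boundary condition on $\Gamma_N$ tractable; the remaining arguments are routine for linear parabolic problems.
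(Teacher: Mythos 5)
Your proposal is correct and takes precisely the approach the paper itself points to: the paper does not write the proof out, but defers to \cite[Theorem 2.1]{BeKuc2007} and \cite[Theorem 3.4]{BeKuc2008} and notes that ``the well-known approach to the proof is based on the Galerkin approximation with spectral basis and the uniform boundedness of approximate solutions in suitable spaces,'' which is exactly your construction. Your details are sound, including the decisive step where the spectral basis makes the projection commute with the operator, so that $A\bfu_n=P_n\bff-\bfu_n'$ yields the uniform $L^2(0,T;\mathcal{D}_u)$ bound.
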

\begin{theorem}[The decoupled heat equation]\label{linear_heat_problem}
Let $h\in L^2(0,T;V_{\theta}^{0,2})$. Then there exists the unique
function $\theta\in L^2(0,T;\mathcal{D}_{\theta}) \cap
L^{\infty}(0,T;V_{\theta}^{1,2})$, $\theta_t\in
L^2(0,T;V_{\theta}^{0,2})$, such that
\begin{equation} \label{lin_var_form_2a}
(\theta_t,\varphi)_{\Omega} + a_{\theta}(\theta,\varphi) =
(h,\varphi)_{\Omega}
\end{equation}
holds for every $\varphi\in V_{\theta}^{1,2}$ and for almost every
$t\in(0,T)$ and $\theta(0) = 0$. Moreover
\begin{equation}\label{eq22a}
\|\theta_t\|_{L^2(0,T;V_{\theta}^{0,2})}
+\|\theta\|_{L^2(0,T;\mathcal{D}_{\theta})}
+\|\theta\|_{L^{\infty}(0,T;V_{\theta}^{1,2})} \leq c(\om)
\|h\|_{L^2(0,T;V_{\theta}^{0,2})}.
\end{equation}
\end{theorem}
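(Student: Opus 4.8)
The plan is to solve this standard linear parabolic problem by the Galerkin method, in complete parallel with Theorem~\ref{stokes_problem}. Since $\Gamma_D\neq\emptyset$, every function in $V_{\theta}^{1,2}$ vanishes (in the trace sense) on a boundary portion of positive measure, so the Poincaré inequality yields coercivity of $a_{\theta}$ on $V_{\theta}^{1,2}$, namely $a_{\theta}(\varphi,\varphi)=\|\nabla\varphi\|_{L^2(\Omega)}^2\geq c\|\varphi\|_{V_{\theta}^{1,2}}^2$; together with the evident continuity of $a_{\theta}$ this places us in the classical Lions--Magenes setting. First I would fix the $L^2(\Omega)$-orthonormal basis $\{w_k\}_{k=1}^{\infty}$ of eigenfunctions of the operator associated with $a_{\theta}$ on $V_{\theta}^{1,2}$ (the Laplacian with the mixed Dirichlet/Neumann conditions encoded in $V_{\theta}^{1,2}$), which is simultaneously orthogonal in $V_{\theta}^{1,2}$, and seek $\theta_n(t)=\sum_{k=1}^{n}c_{k}^{n}(t)\,w_k$ solving the finite-dimensional system $(\theta_{n,t},w_k)_{\Omega}+a_{\theta}(\theta_n,w_k)=(h,w_k)_{\Omega}$, $\theta_n(0)=0$. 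This is a linear ODE system with a unique absolutely continuous solution on $[0,T]$.

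Next I would derive the two a priori estimates underpinning \eqref{eq22a}, working at the Galerkin level where all manipulations are legitimate. Testing with $\theta_n$ gives $\tfrac12\tfrac{d}{dt}\|\theta_n\|_{L^2(\Omega)}^2+\|\nabla\theta_n\|_{L^2(\Omega)}^2=(h,\theta_n)_{\Omega}$; a Young splitting and Gronwall's lemma then control $\theta_n$ in $L^{\infty}(0,T;V_{\theta}^{0,2})\cap L^2(0,T;V_{\theta}^{1,2})$ by $\|h\|_{L^2(0,T;V_{\theta}^{0,2})}$. Because the basis is orthogonal in $V_{\theta}^{1,2}$, testing with $\theta_{n,t}$ is admissible and gives $\|\theta_{n,t}\|_{L^2(\Omega)}^2+\tfrac12\tfrac{d}{dt}\|\nabla\theta_n\|_{L^2(\Omega)}^2=(h,\theta_{n,t})_{\Omega}$; absorbing the right-hand side into the $\|\theta_{n,t}\|_{L^2(\Omega)}^2$ term by Young's inequality and integrating in time bounds $\theta_{n,t}$ in $L^2(0,T;V_{\theta}^{0,2})$ and $\theta_n$ in $L^{\infty}(0,T;V_{\theta}^{1,2})$, again by $\|h\|_{L^2(0,T;V_{\theta}^{0,2})}$. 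These are exactly the three norms on the left of \eqref{eq22a}.

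With these uniform bounds, weak and weak-$*$ compactness extract a subsequence with $\theta_n\rightharpoonup\theta$ in $L^2(0,T;V_{\theta}^{1,2})$, $\theta_n\overset{*}{\rightharpoonup}\theta$ in $L^{\infty}(0,T;V_{\theta}^{1,2})$, and $\theta_{n,t}\rightharpoonup\theta_t$ in $L^2(0,T;V_{\theta}^{0,2})$. Since the form is linear, passing to the limit in the Galerkin identity requires only weak convergence and shows that $\theta$ satisfies \eqref{lin_var_form_2a} for every $\varphi\in V_{\theta}^{1,2}$ and a.e.\ $t$, while the Aubin--Lions lemma, combined with $\theta_n(0)=0$, secures $\theta(0)=0$. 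The spatial regularity $\theta\in L^2(0,T;\mathcal{D}_{\theta})$ then comes for free: because $\theta_t\in L^2(0,T;V_{\theta}^{0,2})$, rewriting \eqref{lin_var_form_2a} as $a_{\theta}(\theta,\varphi)=(h-\theta_t,\varphi)_{\Omega}$ for all $\varphi\in V_{\theta}^{1,2}$ identifies $\theta(t)$, for a.e.\ $t$, as the element of $\mathcal{D}_{\theta}$ associated with the datum $h-\theta_t\in V_{\theta}^{0,2}$ through \eqref{D_t}, and the embedding $\mathcal{D}_{\theta}\hookrightarrow W^{2,2}(\Omega)$ quoted above upgrades this to $\theta\in L^2(0,T;W^{2,2}(\Omega))$. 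Weak lower semicontinuity of the norms transfers the Galerkin bounds to $\theta$, giving \eqref{eq22a}.

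Uniqueness is immediate from linearity: the difference of two solutions solves the homogeneous problem with zero data and zero initial value, so testing with the difference itself and applying Gronwall's lemma forces it to vanish. I do not expect a genuine obstacle here, since this is a textbook mixed-boundary heat equation; the only points demanding care are the coercivity (which rests on $\Gamma_D\neq\emptyset$ and Poincaré) and the legitimacy of testing with $\theta_{n,t}$, which is precisely why a basis orthogonal in $V_{\theta}^{1,2}$ is convenient. Alternatively, one could simply invoke the classical theory of scalar linear parabolic equations (cf.\ \cite{LadSolUr}) once coercivity and the regularity $\mathcal{D}_{\theta}\hookrightarrow W^{2,2}(\Omega)$ are in place.
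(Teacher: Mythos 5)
Your proposal is correct and takes essentially the same approach as the paper: the paper omits a detailed proof, stating only that the result follows by Galerkin approximation with a spectral basis and uniform boundedness of the approximate solutions (as in the cited references of Grisvard and Temam), which is precisely the argument you carry out. Your key ingredients --- coercivity of $a_{\theta}$ via Poincar\'e and $\Gamma_D\neq\emptyset$, the two energy estimates obtained by testing with $\theta_n$ and $\theta_{n,t}$, and the recovery of $\mathcal{D}_{\theta}$-regularity by reading \eqref{lin_var_form_2a} as the elliptic problem with datum $h-\theta_t$ --- match that outline.
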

\begin{proof}[Remark to proofs of Theorem
\ref{linear_heat_problem} and \ref{stokes_problem}]
The assertion of Theorem \ref{stokes_problem} is proved in
\cite[Theorem 2.1]{BeKuc2007} and \cite[Theorem 3.4]{BeKuc2008}. We
omit the proof of Theorem \ref{linear_heat_problem} since it can be
established in the same way, see also \cite[Chapter 5]{grisvard}.
Note that the well-known approach to the proof is based on the
Galerkin approximation with spectral basis and the uniform
boundedness of approximate solutions in suitable spaces, see also
\cite[Chapter 3]{Te}.
\end{proof}


\section{Proof of the main result}\label{sec proof_main}

\subsection{Existence}\label{proof_existence}

Let us introduce the following reflexive Banach spaces
\begin{eqnarray}
X^{u}_{T} &:=& \left\{ \bfphi \;|\; \bfphi\in
L^2(0,T;\mathcal{D}_u),\; \bfphi_t \in
L^2(0,T;\mathbf{V}_{u}^{0,2}),\; \bfphi(0)={\bf0} \right\},
\\
X^{\theta}_{T} &:=& \left\{ \psi \;|\; \psi\in
L^2(0,T;\mathcal{D}_{\theta}),\;  \psi_t \in
L^2(0,T;V_{\theta}^{0,2}),\; \psi(0)=0 \right\},
\\
X_{T} &:=& \left\{ [\bfphi,\psi]\;|\; \bfphi \in X^{u}_{T}, \;
\psi\in X^{\theta}_{T} \right\},
\end{eqnarray}
respectively, with norms
\begin{eqnarray}
\|\bfphi\|_{X^u_{T}} &:=& \dc{ \bfphi}\dc_{L^2(0,T;\,\mathcal{D}_u)}
+ \dc{ \bfphi}_t\dc_{L^2(0,T;\,\mathbf{V}_{u}^{0,2})},
\\
\|\psi\|_{X^{\theta}_{T}} &:=& \dc{
\psi}\dc_{L^2(0,T;\,\mathcal{D}_{\theta})} +
\dc{\psi}_t\dc_{L^2(0,T;\,V_{\theta}^{0,2})},
\\
\|[\bfphi,\psi]\|_{X_{T}} &:=& \dc{\bfphi}\dc_{X^u_{T}} + \dc{
\psi}\dc_{X^{\theta}_{T}}.
\end{eqnarray}
Let us present some properties of $X^u_T$, $X^{\theta}_T$ and
(consequently) $X_T$. Note that \cite[Remark 2.2 and Corollary
2.3]{Benes2009} yields the embedding $\mathcal{D}_u\hookrightarrow
\mathbf{W}^{2,2}$, which implies
\begin{equation}\label{emb_130}
X^u_{T} \hookrightarrow L^{\infty}(0,T;\mathbf{W}^{1,2})
\hookrightarrow L^{\infty}(0,T;\mathbf{L}^{p}), \quad 1\leq
p<\infty.
\end{equation}
Let $\bfphi \in X^u_{T}$.  Using the interpolation inequality
\begin{equation}\label{int_ineq_010}
\| \bfphi\|_{\mathbf{W}^{3/2,2}} \leq c \| \bfphi
\|^{1/2}_{\mathbf{W}^{1,2}} \| \bfphi\|^{1/2}_{\mathbf{W}^{2,2}}
\end{equation}
we get
\begin{eqnarray*}\nonumber
\|\bfphi\|_{L^4(0,T;\mathbf{W}^{3/2,2})} &\leq& c
\|\bfphi\|^{1/2}_{L^2(0,T;\mathbf{W}^{2,2})}
\|\bfphi\|^{1/2}_{L^\infty(0,T;\mathbf{W}^{1,2})}
\\
&\leq& \,c\,\|\bfphi\|_{X^u_{T}},
\end{eqnarray*}
where $c=c(\Omega)$. Hence we have
\begin{equation}\label{emb_160}
X^u_{T} \hookrightarrow L^{4}(0,T;\mathbf{W}^{3/2,2})\hookrightarrow
L^{4}(0,T;\mathbf{W}^{1,4})\hookrightarrow L^{4}(0,T;\mathbf{L}^{p})
\end{equation}
for every $1\leq p\leq\infty$. Similar properties hold for the space
$X^{\theta}_T$. Since \cite{KufSan} yields the embedding
$\mathcal{D}_{\theta}\hookrightarrow W^{2,2}(\Omega)$, we get
\begin{equation}\label{emb_230}
X^{\theta}_{T} \hookrightarrow L^{\infty}(0,T;{W}^{1,2}(\Omega))
\hookrightarrow L^{\infty}(0,T;{L}^{p}(\Omega)), \quad 1\leq
p<\infty,
\end{equation}
and
\begin{equation}\label{emb_260}
X^{\theta}_{T} \hookrightarrow
L^{4}(0,T;{W}^{3/2,2}(\Omega))\hookrightarrow
L^{4}(0,T;{W}^{1,4}(\Omega))\hookrightarrow
L^{4}(0,T;{L}^{p}(\Omega))
\end{equation}
for every $1\leq p\leq\infty$.

Let $\phi\in L^{2}(0,T;{W}^{2,2}(\Omega)) \cap
L^{\infty}(0,T;{W}^{1,2}(\Omega))$. Raising and integrating the
interpolation inequality
\begin{equation*}
\| \phi(t) \|_{W^{1+j,2}(\om)} \leq c\| \phi(t)
\|^{j}_{W^{2,2}(\om)} \|\phi(t)\|^{1-j}_{W^{1,2}(\om)}, \; 0 < j
\leq 1,
\end{equation*}
from $0$ to $T$ we get
\begin{eqnarray}\label{est_121}
 \left( \int^T_0 \| \phi(t) \|^{2/j}_{W^{1+j,2}(\om)}
 {\rm d}t \right)^{j/2}  &\leq& c\left( \int^T_0 \| \phi(t)
\|^2_{W^{2,2}(\om)}\|\phi(t)\|^{2(1-j)/j}_{W^{1,2}(\om)} {\rm d}t
\right)^{j/2}
\nonumber\\
&\leq& c\| \phi \|^{j}_{{L}^{2}(0,T;W^{2,2}(\om))} \| \phi
\|^{1-j}_{L^{\infty}(0,T;W^{1,2}(\om))},
\end{eqnarray}
where $c=c(\Omega)$. Hence we have
\begin{align}
L^{2}(0,T;{W}^{2,2}(\Omega)) \cap L^{\infty}(0,T;{W}^{1,2}(\Omega))
&\hookrightarrow L^{2/j}(0,T;{W}^{1+j,2}(\Omega))
\\
&\hookrightarrow L^{2/j}(0,T;{L}^{\infty}(\Omega))
\end{align}
for arbitrarily small positive $j$ and consequently we deduce
\begin{equation}
X^{u}_{T} \hookrightarrow L^{q}(0,T;\mathbf{L}^{\infty})
\end{equation}
and
\begin{equation}
X^{\theta}_{T} \hookrightarrow L^{q}(0,T;{L}^{\infty}(\Omega))
\end{equation}
for $2\leq q<\infty$.


\begin{remark}\label{subst_hom_problem}
Setting $[\bfw,\vartheta]=[\bfu-\unu,\theta-\theta_0-g]$ this
amounts to solving the problem with homogeneous boundary and initial
data
\begin{eqnarray}
(\bfw_t,\bfv)+a_u(\bfw+\unu,\bfv)+b(\bfw+\unu,\bfw+\unu,\bfv) &&
\nonumber
\\
-(\bff (\vartheta+\theta_0+g),\bfv) &=& (\bff,\bfv), \nonumber
\\
\label{var_form_2a}
\\
(\vartheta_t+g_t,\varphi)_{\Omega}+
a_{\theta}(\vartheta+\theta_0+g,\varphi)\nonumber +
d(\bfw+\unu,\vartheta+\theta_0+g,\varphi)
\\
+((\vartheta+\theta_0+g)\bff\cdot(\bfw+\unu),\varphi)_{\Omega}
-e(\bfw+\unu,\bfw+\unu,\varphi) &=& ( h,\varphi )_{\Omega},
\nonumber\\
\\
\bfw(0) &=& {\bfzero}, \label{var_form_2b}
\\
\vartheta(0)&=&0
\end{eqnarray}
for every $[\bfv,\varphi]\in \mathbf{V}_{u}^{1,2} \times
V_{\theta}^{1,2}$ and for almost every $t\in(0,T)$, where $\bff\in
L^2(0,T;\mathbf{V}_{u}^{0,2})$, $h \in L^2(0,T;{V}_{\theta}^{0,2})$,
$\unu\in \mathcal{D}_u$ and $\theta_0 \in \mathcal{D}_{\theta}$.
\end{remark}
For arbitrary fixed
$[\widetilde{\bfw},\widetilde{\vartheta}]\in{X}_{T}$  we now
consider the nonlinear problem
\begin{eqnarray}
(\bfw_t,\bfv) + a_u(\bfw,\bfv)  &=& (\bff,\bfv)-a_u(\unu,\bfv) -
b(\unu,\unu,\bfv)\nonumber
\\
&&-b(\unu,\widetilde{\bfw},\bfv)-b(\widetilde{\bfw},\unu,\bfv)
-b(\widetilde{\bfw},\widetilde{\bfw},\bfv) \nonumber
\\
&&+(\bff (\theta_0+g),\bfv) +(\bff \widetilde{\vartheta},\bfv),
\nonumber
\\
\label{linear_problem_1}
\\
(\vartheta_t,\varphi)_{\Omega}+a_{\theta}(\vartheta,\varphi)+e(\bfw,\bfw,\varphi)&=&
-2e(\widetilde{\bfw},\unu,\varphi)-e(\unu,\unu,\varphi)+(h,\varphi)_{\Omega}
\nonumber
\\
&&-
(g_t,\varphi)_{\Omega}-a_{\theta}(\theta_0+g,\varphi)-d(\unu,\theta_0+g,\varphi)
\nonumber
\\
&&-d(\widetilde{\bfw},\widetilde{\vartheta},\varphi)
-d(\widetilde{\bfw},\theta_0+g,\varphi)
-d(\unu,\widetilde{\vartheta},\varphi) \nonumber
\\
&& + ((\theta_0+g)\bff\cdot\unu,\varphi)_{\Omega} +
(\widetilde{\vartheta}\bff\cdot\unu,\varphi)_{\Omega}
\nonumber\\
&& + ((\theta_0+g)\bff\cdot\widetilde{\bfw},\varphi)_{\Omega}+
(\widetilde{\vartheta}\bff\cdot\widetilde{\bfw},\varphi)_{\Omega},
\nonumber\\
\label{linear_problem_2}
\\
\bfw(0) &=& \bfzero, \label{linear_problem_3}
\\
\vartheta(0)&=&0 \label{linear_problem_4}
\end{eqnarray}
for every $[\bfv,\varphi]\in \mathbf{V}_{u}^{1,2} \times
V_{\theta}^{1,2}$ and for almost every $t\in(0,T)$.

In the proof of the next lemma we verify that all terms on the
right-hand sides of
\eqref{linear_problem_1}--\eqref{linear_problem_2} and the
dissipative term $e(\bfw,\bfw,\cdot)$ are well defined. From Theorem
\ref{stokes_problem} and Theorem \ref{linear_heat_problem} we deduce
that for arbitrary fixed
$[\widetilde{\bfw},\widetilde{\vartheta}]\in{X}_{T}$ there exists
$[\bfw,\vartheta]=\mathcal{Z}([\widetilde{\bfw},\widetilde{\vartheta}])\in{X}_{T}$,
the solution of the problem
\eqref{linear_problem_1}--\eqref{linear_problem_4}. Consequently,
the mapping $\mathcal{Z}:{X}_{T}\rightarrow{X}_{T}$ is well defined.

Denote by $B_R(T) \subset X_{T}$ the closed ball
\begin{equation}
B_R(T):=\{[\bfphi,\psi]\in X_{T};\, \|[\bfphi,\psi]\|_{{X}_{T}}\leq
R\}.
\end{equation}
We are going to show that the formal map
$\mathcal{Z}:[\widetilde{\bfw},\widetilde{\vartheta}]\rightarrow
[\bfw,\vartheta]$, where $[\bfw,\vartheta]$ is a solution of the
problem \eqref{linear_problem_1}--\eqref{linear_problem_4}, has a
fixed point in $B_R(T^*)$ for $T^*$ sufficiently small.

Let us prepare the existence proof. We now claim the following
\begin{lemma}\label{ball_into}
For all $R>0$, $T>0$ and for every
$[\widetilde{\bfw},\widetilde{\vartheta}]\in B_R(T)$ we have
\begin{equation}\label{est_Z}
\|\mathcal{Z}([\widetilde{\bfw},\widetilde{\vartheta}])\|_{X_T}\leq
c_1\xi_1(T)+c_2\xi_2(T)\omega(R),
\end{equation}
where $\omega$ is an increasing function depending solely on $R$,
 functions $\xi_1$ and $\xi_2$ depend solely on $T$ and
$\xi_1(T)\rightarrow 0_+$, $\xi_2(T)\rightarrow 0_+$ for
$T\rightarrow 0_+$ and the positive constants $c_1$ and $c_2$ are
independent of $T$ and $R$.
\end{lemma}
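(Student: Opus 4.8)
The plan is to apply the linear energy estimates of Theorems \ref{stokes_problem} and \ref{linear_heat_problem} to the decoupled problems \eqref{linear_problem_1} and \eqref{linear_problem_2}, and then to bound, term by term, the $L^2(0,T;\mathbf{V}_{u}^{0,2})$- and $L^2(0,T;V_{\theta}^{0,2})$-norms of the $L^2(\Omega)$-functions representing their right-hand sides. Writing $[\bfw,\vartheta]=\mathcal{Z}([\widetilde{\bfw},\widetilde{\vartheta}])$, estimate \eqref{eq11a} gives $\|\bfw\|_{X^u_{T}}\leq c(\Omega)\|\mathbf{F}\|_{L^2(0,T;\mathbf{V}_{u}^{0,2})}$, where $\mathbf{F}$ is the $\mathbf{V}_{u}^{0,2}$-representative of the right-hand side of \eqref{linear_problem_1}, and \eqref{eq22a} gives $\|\vartheta\|_{X^{\theta}_{T}}\leq c(\Omega)\|G\|_{L^2(0,T;V_{\theta}^{0,2})}$ for the heat problem. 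The coupling is triangular: the right-hand side of \eqref{linear_problem_1} depends only on the data and on $[\widetilde{\bfw},\widetilde{\vartheta}]$, whereas that of \eqref{linear_problem_2} contains, in addition, the dissipative term $e(\bfw,\bfw,\cdot)$ built from the \emph{solution} $\bfw$. I would therefore first close the estimate for $\bfw$, and only then use it inside the estimate for $\vartheta$.

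For each term I would pass to the $L^2(\Omega)$-density of the corresponding form, namely $b(\bfa,\bfb,\cdot)\leftrightarrow(\bfa\cdot\nabla)\bfb$, $d(\bfa,\psi,\cdot)\leftrightarrow\bfa\cdot\nabla\psi$, $e(\bfa,\bfb,\cdot)\leftrightarrow\bfe(\bfa):\bfe(\bfb)$, while $a_u(\unu,\cdot)$ and $a_{\theta}(\theta_0,\cdot)$ are represented by the densities supplied by the definitions \eqref{D_u}, \eqref{D_t} with norms $\|\unu\|_{\mathcal{D}_u}$, $\|\theta_0\|_{\mathcal{D}_{\theta}}$. Each density is estimated in $L^2(\Omega)$ by Hölder's inequality, using the embeddings $X^u_{T}\hookrightarrow L^{\infty}(0,T;\mathbf{W}^{1,2})$, $X^u_{T}\hookrightarrow L^4(0,T;\mathbf{W}^{1,4})$ and $X^u_{T}\hookrightarrow L^{q}(0,T;\mathbf{L}^{\infty})$ (and their $\theta$-analogues) from \eqref{emb_130}--\eqref{emb_260}, together with the pointwise-in-time bound $\|\widetilde{\bfw}(t)\|_{\mathbf{W}^{1,2}}+\|\widetilde{\vartheta}(t)\|_{W^{1,2}(\Omega)}\leq cR$ obtained from the $L^{\infty}$-in-time embedding and $[\widetilde{\bfw},\widetilde{\vartheta}]\in B_R(T)$. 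A final Hölder inequality in time puts a positive power of $T$ in front of every term; for instance $b(\widetilde{\bfw},\widetilde{\bfw},\cdot)$ is bounded by $cT^{1/4}R^2$, since $\|(\widetilde{\bfw}\cdot\nabla)\widetilde{\bfw}\|_{L^2(0,T;L^2)}\leq c\,\|\widetilde{\bfw}\|_{L^{\infty}(0,T;\mathbf{L}^4)}\,\|\nabla\widetilde{\bfw}\|_{L^2(0,T;\mathbf{L}^4)}\leq cT^{1/4}R^2$. The hypothesis $\bff\in L^{2+\mu}(0,T;\mathbf{V}_{u}^{0,2})$ is used precisely here: even the bare forcing $(\bff,\cdot)$ gains a factor through $\|\bff\|_{L^2(0,T;\mathbf{V}_{u}^{0,2})}\leq T^{\mu/(2(2+\mu))}\|\bff\|_{L^{2+\mu}(0,T;\mathbf{V}_{u}^{0,2})}$, and products such as $\bff\,\widetilde{\vartheta}$ or $(\theta_0+g)\,\bff\cdot\unu$ are split by Hölder in time between the $L^{2+\mu}$-norm of $\bff$ and the $L^{q}(0,T;L^{\infty})$-norms of the other factors; the purely data-carried terms $(h,\cdot)_{\Omega}$ and $(g_t,\cdot)_{\Omega}$ tend to $0$ as $T\to0_+$ by absolute continuity of $t\mapsto\int_0^t(\|h\|^2+\|g_t\|^2)\,{\rm d}s$.

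Collecting everything, each term carrying only data is absorbed into $c_1\xi_1(T)$ and each term containing a factor of $\widetilde{\bfw}$ or $\widetilde{\vartheta}$ into $c_2\xi_2(T)\omega(R)$, where $\xi_1,\xi_2$ are the (smallest) powers of $T$ that occur, supplemented by the vanishing data integrals above, so that $\xi_i(T)\to0_+$ as $T\to0_+$, while $\omega(R)$ may be taken as an increasing polynomial in $R$. The dissipative term must be handled last because it is quadratic in the solution: its norm equals $\|\bfe(\bfw):\bfe(\bfw)\|_{L^2(0,T;L^2)}=\|\bfe(\bfw)\|^2_{L^4(0,T;\mathbf{L}^4)}\leq c\|\bfw\|^2_{X^u_{T}}$, so substituting the bound already secured for $\|\bfw\|_{X^u_{T}}$ feeds in a square $(c_1\xi_1(T)+c_2\xi_2(T)\omega(R))^2$, which raises the degree of $\omega$ (to at most $4$) but, since $\xi_i(T)^2\to0_+$, remains of the required form \eqref{est_Z}.

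The step I expect to be the main obstacle is exactly this dissipative term: verifying $\bfe(\bfw):\bfe(\bfw)\in L^2(0,T;L^2(\Omega))$ is borderline and relies on the full strength of the embedding $X^u_{T}\hookrightarrow L^4(0,T;\mathbf{W}^{1,4})$ in \eqref{emb_160}, which itself rests on the regularity $\mathcal{D}_u\hookrightarrow\mathbf{W}^{2,2}$ for the ``do nothing'' Stokes problem and on the two-dimensional interpolation \eqref{int_ineq_010}; it is here that $\dim\Omega=2$ and the $W^{2,2}$-regularity of the solution are indispensable. The remaining care is organisational: identifying the smallest power of $T$ so as to define $\xi_1,\xi_2$, and checking that the boundary contribution arising when $a_{\theta}(g,\cdot)$ is represented through $-\Delta g$ is compatible with the zero-flux condition \eqref{boundary temperature2}.
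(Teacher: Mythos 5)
Your proposal is correct and follows essentially the same route as the paper's proof: apply Theorems \ref{stokes_problem} and \ref{linear_heat_problem} to the triangular system, bound each right-hand-side density in $L^2(0,T;L^2)$ by H\"older in space--time via the embeddings \eqref{emb_130}--\eqref{emb_260} and the $L^{q}(0,T;L^{\infty})$ embedding (extracting powers of $T$, with the $L^{2+\mu}$ integrability of $\bff$ used exactly as you describe), and close the estimate for $\bfw$ first so that the dissipative term $e(\bfw,\bfw,\cdot)$ can be bounded by $c\|\bfw\|^2_{X^u_T}$ and the already-obtained bound squared. Your identification of the dissipative term as the critical step resting on $X^u_T\hookrightarrow L^4(0,T;\mathbf{W}^{1,4})$, hence on $\mathcal{D}_u\hookrightarrow\mathbf{W}^{2,2}$, matches the paper's argument (cf.\ \eqref{eq60} and the estimate following it).
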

\begin{proof}
The proof is rather technical. Obviously, there exists a positive
function $K_1(T)$ such that
\begin{multline}\label{eq37a}
\|(\bff,\cdot) - a_u(\unu,\cdot) - b(\unu,\unu,\cdot)+( \bff
(\theta_0+g),\cdot)\|_{L^2(0,T;\mathbf{V}_{u}^{0,2})}
\\
+ \|(h,\cdot)_{\Omega}-
(g_t,\cdot)_{\Omega}-a_{\theta}(\theta_0+g,\cdot)-
d(\unu,\theta_0+g,\cdot)\|_{L^2(0,T;V_{\theta}^{0,2})}
\\
+ \|e(\unu,\unu,\cdot)+((\theta_0+g)\bff\cdot\unu,\cdot)_{\Omega}
\|_{L^2(0,T;V_{\theta}^{0,2})} \leq c K_1(T)
\end{multline}
and $K_1(T)\rightarrow 0_+$ for $T\rightarrow 0_+$, $c$ is
independent of $T$.

Similarly and using the interpolation inequality we get
\begin{eqnarray}\label{eq38}
\|
b(\widetilde{\bfw},\unu,\cdot)\|_{L^2(0,T;\mathbf{V}_{u}^{0,2})}&\leq&
\left(\int_0^T\|\widetilde{\bfw}\|_{\mathbf{L}^4}^2\|\nabla\unu\|^2_{\mathbf{L}^{4}}
{\rm d}t\right)^{1/2} \nonumber
\\
\nonumber &\leq&
\|\unu\|_{\mathbf{W}^{1,4}}\|\widetilde{\bfw}\|_{L^2(0,T;\mathbf{L}^{4})}
\\
\nonumber &\leq&  T^{1/4} \|\unu\|_{\mathbf{W}^{1,4}}
\|\widetilde{\bfw}\|_{L^{4}(0,T;\mathbf{L}^{4})}
\\
&\leq& c
T^{1/4}\|\unu\|_{\mathcal{D}_u}\|\widetilde{\bfw}\|_{{X}^u_{T}}.
\end{eqnarray}
The other terms in \eqref{linear_problem_1} can be handled in the
same way
\begin{eqnarray}\label{eq39}
\|
b(\unu,\widetilde{\bfw},\cdot)\|_{L^2(0,T;\mathbf{V}_{u}^{0,2})}&\leq&
\left(\int_0^T\|\unu\|^2_{\mathbf{L}^{\infty}}
\|\nabla\widetilde{\bfw}\|_{\mathbf{L}^{2}}^2 {\rm d}t\right)^{1/2}
\nonumber \\
&\leq& \|\unu\|_{\mathbf{L}^{\infty}}\|\widetilde{\bfw}
\|_{L^2(0,T;\mathbf{W}^{1,2})}
\nonumber \\
 &\leq&  T^{1/4} \|\unu\|_{\mathbf{L}^{\infty}}
\|\widetilde{\bfw}\|_{L^{4}(0,T;\mathbf{W}^{1,2})}
\nonumber\\
&\leq& c \,
T^{1/4}\|\unu\|_{\mathcal{D}_u}\|\widetilde{\bfw}\|_{{X}^u_{T}},
\end{eqnarray}
\begin{eqnarray}\label{eq40}
\|b(\widetilde{\bfw},\widetilde{\bfw},\cdot)\|_{L^2(0,T;\mathbf{V}_{u}^{0,2})}
&\leq& \left(\int_0^T \|\widetilde{\bfw}\|^2_{\mathbf{L}^{4}}
\|\nabla\widetilde{\bfw}\|^2_{\mathbf{L}^{4}}{\rm d}t\right)^{1/2}
\nonumber\\
&\leq&  T^{1/4}
\|\widetilde{\bfw}\|_{L^{\infty}(0,T;\mathbf{L}^{4})}
\|\widetilde{\bfw}\|_{L^{4}(0,T;\mathbf{W}^{1,4})}
\nonumber  \\
 &\leq& c \, T^{1/4}\|\widetilde{\bfw}\|^2_{{X}^u_{T}}
\end{eqnarray}
and finally the last term in \eqref{linear_problem_1} can be
estimated using the interpolation inequality as
\begin{eqnarray}\label{eq41}
\|(\bff
\widetilde{\vartheta},\cdot)\|_{L^2(0,T;\mathbf{V}_{u}^{0,2})}
&\leq& \left(\int_0^T \|\bff\|^{2}_{\mathbf{L}^{2}}
\|\widetilde{\vartheta}\|^2_{L^{\infty}(\om)}
 {\rm d}t\right)^{1/2}
\nonumber\\
&\leq&
T^{\mu/[(2+\mu)(4+\mu)]}\|\bff\|_{L^{2+\mu}(0,T;\mathbf{L}^{2})}
\|\widetilde{\vartheta}\|_{L^{2(4+\mu)/\mu}(0,T;L^{\infty}(\om))}
\nonumber\\
&\leq& c
T^{\mu/[(2+\mu)(4+\mu)]}\|\bff\|_{L^{2+\mu}(0,T;\mathbf{L}^{2})}
\|\widetilde{\vartheta}\|_{{X}^{\theta}_{T}}.
\nonumber\\
\end{eqnarray}
The inequalities \eqref{eq37a}--\eqref{eq41} and Theorem
\ref{stokes_problem} yield the estimate
\begin{equation}\label{eq60}
\|\bfw\|_{{X}^{u}_{T}} \leq K_1(T) + c(R^2+R) T^{1/4}+ c
T^{\mu/[(2+\mu)(4+\mu)]}R,
\end{equation}
where $K_1(T)\rightarrow 0_+$ for $T\rightarrow 0_+$ and $c$ is
independent of $T$. Successively, we use \eqref{emb_160} and
\eqref{eq60} to estimate the dissipative term $e(\bfw,\bfw,\cdot)$
in \eqref{linear_problem_2}
\begin{eqnarray}\label{}
\|e(\bfw,\bfw,\cdot)\|_{L^2(0,T;{V}_{\theta}^{0,2})} &\leq&
\left(\int_0^T \|\bfe({\bfw})\|^4_{\mathbf{L}^{4}} {\rm
d}t\right)^{1/2}
\nonumber\\
&\leq& \|{\bfw}\|^2_{L^{4}(0,T;\mathbf{W}^{1,4})}
\nonumber\\
&\leq&c \|\bfw\|^2_{{X}^{u}_{T}}
\nonumber\\
&\leq& \left[K_1(T) + c(R^2+R) T^{1/4}+ c
T^{\mu/[(2+\mu)(4+\mu)]}R\right]^2.
\nonumber\\
\end{eqnarray}
Now we are going to estimate all remaining terms on the right-hand
side of \eqref{linear_problem_2}.  We use interpolation inequality
and \eqref{emb_160}
 to estimate the dissipative term $e(\widetilde{\bfw},\unu,\varphi)$
\begin{eqnarray}\label{eq45}
\|e(\widetilde{\bfw},\unu,\cdot)\|_{L^2(0,T;{V}_{\theta}^{0,2})}
&\leq&\left(\int_0^T\|\nabla\unu\|^2_{\mathbf{L}^{4}}
\|\nabla\widetilde{\bfw}\|^2_{\mathbf{L}^{4}} {\rm d}t\right)^{1/2}
\nonumber\\
 &\leq& T^{1/4} \|\nabla\unu\|_{\mathbf{L}^{4}}
\|\nabla\widetilde{\bfw}\|_{L^{4}(0,T;\mathbf{L}^4)}
\nonumber\\
&\leq& c \,
T^{1/4}\|\unu\|_{\mathcal{D}_u}\|\widetilde{\bfw}\|_{{X}^u_{T}}
\end{eqnarray}
and taking into account \eqref{emb_160} and \eqref{emb_260} we
arrive at the estimate
\begin{eqnarray}\label{eq42}
\|d(\widetilde{\bfw},\widetilde{\vartheta},\cdot)\|_{L^2(0,T;{V}_{\theta}^{0,2})}
&\leq& \left(\int_0^T \|\widetilde{\bfw}\|^2_{\mathbf{L}^4}
\|\nabla\widetilde{\vartheta}\|^2_{\mathbf{L}^{4}(\om)}{\rm
d}t\right)^{1/2}
\nonumber\\
&\leq&  T^{1/4}
\|\widetilde{\bfw}\|_{L^{\infty}(0,T;\mathbf{L}^{4})}
\|\widetilde{\vartheta}\|_{L^{4}(0,T;W^{1,4}(\om))}
\nonumber\\
 &\leq& c
T^{1/4}\|\widetilde{\bfw}\|_{{X}^{u}_{T}}
\|\widetilde{\vartheta}\|_{{X}^{\theta}_{T}}
\nonumber\\
&\leq&
cT^{1/4}\frac{1}{2}\left(\|\widetilde{\bfw}\|^2_{{X}^{u}_{T}}+
\|\widetilde{\vartheta}\|^2_{{X}^{\theta}_{T}}\right).
\end{eqnarray}
Let us proceed to estimate the remaining convective terms:
\begin{eqnarray}\label{eq43}
\|d(\widetilde{\bfw},\theta_0+g,\cdot)\|_{L^2(0,T;{V}_{\theta}^{0,2})}
&\leq& \left(\int_0^T \|\widetilde{\bfw}\|^2_{\mathbf{L}^{4}}
\|\theta_0+g\|^2_{W^{1,4}(\om)}{\rm d}t\right)^{1/2}
\nonumber\\
&\leq&  T^{1/4}
\|\widetilde{\bfw}\|_{L^{\infty}(0,T;\mathbf{L}^{4})}
\|\theta_0+g\|_{L^{4}(0,T;W^{1,4}(\om))}
\nonumber\\
 &\leq& c
T^{1/4}\|\widetilde{\bfw}\|_{{X}^{u}_{T}}\|\theta_0+g\|_{{X}^{\theta}_{T}}
\nonumber\\
&\leq&
cT^{1/4}\frac{1}{2}\left(\|\widetilde{\bfw}\|^2_{{X}^{u}_{T}}+
\|\theta_0+g\|^2_{{X}^{\theta}_{T}}\right)
\nonumber\\
&\leq& c T^{1/4}\frac{1}{2}\|\widetilde{\bfw}\|^2_{{X}^{u}_{T}} +
K_2(T),
\end{eqnarray}
where the positive function $K_2(T)\rightarrow 0_+$ for
$T\rightarrow 0_+$. Further
\begin{eqnarray}\label{eq44}
\|d(\unu,\widetilde{\vartheta},\cdot)\|_{L^2(0,T;{V}_{\theta}^{0,2})}
&\leq& \left(\int_0^T \|\unu\|^2_{\mathbf{L}^{\infty}}
\|\nabla\widetilde{\vartheta}\|^2_{\mathbf{L}^{2}}{\rm
d}t\right)^{1/2}
\nonumber\\
&\leq&  T^{1/4} \|\unu\|_{\mathbf{L}^{\infty}}
\|\widetilde{\vartheta}\|_{L^{4}(0,T;W^{1,2}(\om))}
\nonumber\\
 &\leq& c
T^{1/4}\|\unu\|_{\mathcal{D}_u}\|\widetilde{\vartheta}\|_{{X}^{\theta}_{T}}.
\end{eqnarray}
In the similar way one can deduce for the adiabatic terms the
estimates
\begin{multline}\label{eq46}
\| (\widetilde{\vartheta}\bff\cdot\widetilde{\bfw},\cdot)_{\Omega}
\|_{L^2(0,T;{V}_{\theta}^{0,2})} \leq \left(\int_0^T
\|\widetilde{\vartheta}\|^2_{L^{\infty}(\om)}
\|\bff\|^2_{\mathbf{L}^{2}}
\|\widetilde{\bfw}\|^2_{\mathbf{L}^{\infty}} {\rm d}t\right)^{1/2}
\\
 \leq
 \|\widetilde{\vartheta}\|_{L^{4(4+\mu)/\mu}(0,T;L^{\infty}(\om))}
 \|\bff\|_{L^{2+\mu/2}(0,T;\mathbf{L}^{2})}
 \|\widetilde{\bfw}\|_{L^{4(4+\mu)/\mu}(0,T;\mathbf{L}^{\infty})}
\\
 \leq c
 \|\widetilde{\vartheta}\|_{{X}^{\theta}_{T}}
 \|\bff\|_{L^{2+\mu/2}(0,T;\mathbf{L}^{2})}
 \|\widetilde{\bfw}\|_{{X}^{u}_{T}}
 \\
 \leq c T^{\mu/[(2+\mu)(4+\mu)]}
 \|\widetilde{\vartheta}\|_{{X}^{\theta}_{T}}
 \|\bff\|_{L^{2+\mu}(0,T;\mathbf{L}^{2})}
 \|\widetilde{\bfw}\|_{{X}^{u}_{T}}
\end{multline}
and, in the same way,
\begin{multline}\label{eq47}
\|
((\theta_0+g)\bff\cdot\widetilde{\bfw},\cdot)_{\Omega}\|_{L^2(0,T;{V}_{\theta}^{0,2})}
\\
 \leq
 \|\theta_0+g\|_{L^{4(4+\mu)/\mu}(0,T;L^{\infty}(\om))}
 \|\bff\|_{L^{2+\mu/2}(0,T;\mathbf{L}^{2})}
 \|\widetilde{\bfw}\|_{L^{4(4+\mu)/\mu}(0,T;\mathbf{L}^{\infty})}
\\
 \leq c
 \|\theta_0+g\|_{{X}^{\theta}_{T}}
 \|\bff\|_{L^{2+\mu/2}(0,T;\mathbf{L}^{2}(\om))}
 \|\widetilde{\bfw}\|_{{X}^{u}_{T}}
 \\
 \leq c T^{\mu/[(2+\mu)(4+\mu)]}
 \|\theta_0+g\|_{{X}^{\theta}_{T}}
 \|\bff\|_{L^{2+\mu}(0,T;\mathbf{L}^{2})}
 \|\widetilde{\bfw}\|_{{X}^{u}_{T}}.
\end{multline}
Finally,
\begin{multline}\label{eq48}
\|
(\widetilde{\vartheta}\bff\cdot\unu,\cdot)_{\Omega}\|_{L^2(0,T;{V}_{\theta}^{0,2})}
\leq \left(\int_0^T
\|\widetilde{\vartheta}\|^2_{L^{\infty}(\om)}\|\bff\|^{2}_{\mathbf{L}^{2}}
\|\unu\|^2_{\mathbf{L}^{\infty}} {\rm d}t\right)^{1/2}
\\
\leq T^{\mu/[(2+\mu)(4+\mu)]}
\|\widetilde{\vartheta}\|_{L^{2(4+\mu)/\mu}(0,T;L^{\infty}(\om))}
\|\bff\|_{L^{2+\mu}(0,T;\mathbf{L}^{2})}
\|\unu\|_{\mathbf{L}^{\infty}}
\\
\leq c T^{\mu/[(2+\mu)(4+\mu)]}
\|\widetilde{\vartheta}\|_{{X}^{\theta}_{T}}
\|\bff\|_{L^{2+\mu}(0,T;\mathbf{L}^{2})}
\|\unu\|_{\mathbf{L}^{\infty}}.
\end{multline}
Now the assertion is the straight consequence of inequalities
\eqref{eq37a}--\eqref{eq48}, Theorem \ref{linear_heat_problem} and
Theorem \ref{stokes_problem}. The proof of Lemma \ref{ball_into} is
complete.
\end{proof}
\begin{lemma}\label{ball_contraction}
For $T^* \in (0,T]$, sufficiently small, $\mathcal{Z}$ maps
$B_R(T^*)$ into $B_R(T^*)$ and realizes a contraction.
\end{lemma}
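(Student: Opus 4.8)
The plan is to invoke the Banach contraction principle, which requires two things: that $\mathcal{Z}$ maps $B_R(T^*)$ into itself, and that it is a contraction there. For the self-map property I would argue directly from Lemma \ref{ball_into}. Fix any $R>0$; since $\xi_1(T),\xi_2(T)\to 0_+$ as $T\to 0_+$ while $\omega(R)$ is a fixed number, the bound \eqref{est_Z} shows that $\|\mathcal{Z}([\widetilde{\bfw},\widetilde{\vartheta}])\|_{X_{T}}\to 0$ as $T\to 0_+$, uniformly over $B_R(T)$. Hence there is $T_1^*\in(0,T]$ with $c_1\xi_1(T_1^*)+c_2\xi_2(T_1^*)\omega(R)\le R$, and for every $T^*\le T_1^*$ one gets $\mathcal{Z}(B_R(T^*))\subseteq B_R(T^*)$.

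For the contraction I would take $[\widetilde{\bfw}_i,\widetilde{\vartheta}_i]\in B_R(T^*)$, $i=1,2$, write $[\bfw_i,\vartheta_i]=\mathcal{Z}([\widetilde{\bfw}_i,\widetilde{\vartheta}_i])$, and set $\delta\bfw=\bfw_1-\bfw_2$, $\delta\vartheta=\vartheta_1-\vartheta_2$, $\delta\widetilde{\bfw}=\widetilde{\bfw}_1-\widetilde{\bfw}_2$, $\delta\widetilde{\vartheta}=\widetilde{\vartheta}_1-\widetilde{\vartheta}_2$. Subtracting the two copies of \eqref{linear_problem_1}, the difference $\delta\bfw$ solves the Stokes problem of Theorem \ref{stokes_problem} with right-hand side equal to the difference of the source terms. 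Terms linear in $\widetilde{\bfw},\widetilde{\vartheta}$ (such as $b(\unu,\widetilde{\bfw},\cdot)$, $b(\widetilde{\bfw},\unu,\cdot)$ and $(\bff\widetilde{\vartheta},\cdot)$) simply have $\widetilde{\bfw},\widetilde{\vartheta}$ replaced by $\delta\widetilde{\bfw},\delta\widetilde{\vartheta}$, while the quadratic term splits bilinearly, $b(\widetilde{\bfw}_1,\widetilde{\bfw}_1,\cdot)-b(\widetilde{\bfw}_2,\widetilde{\bfw}_2,\cdot)=b(\delta\widetilde{\bfw},\widetilde{\bfw}_1,\cdot)+b(\widetilde{\bfw}_2,\delta\widetilde{\bfw},\cdot)$. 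I would bound each piece exactly as in \eqref{eq38}--\eqref{eq41}, keeping one factor as the difference and using $\|\widetilde{\bfw}_i\|_{X^u_{T^*}},\|\widetilde{\vartheta}_i\|_{X^\theta_{T^*}}\le R$ on the other; since every such bound carries a positive power of $T^*$, Theorem \ref{stokes_problem} delivers $\|\delta\bfw\|_{X^u_{T^*}}\le \alpha(T^*,R)\big(\|\delta\widetilde{\bfw}\|_{X^u_{T^*}}+\|\delta\widetilde{\vartheta}\|_{X^\theta_{T^*}}\big)$ with $\alpha(T^*,R)\to 0$ as $T^*\to 0_+$.

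Then I would subtract the two copies of \eqref{linear_problem_2}, so that $\delta\vartheta$ solves the heat problem of Theorem \ref{linear_heat_problem}. The convective and adiabatic source terms are handled just as in \eqref{eq42}--\eqref{eq48} after the bilinear splitting of products such as $d(\widetilde{\bfw},\widetilde{\vartheta},\cdot)$ and $(\widetilde{\vartheta}\bff\cdot\widetilde{\bfw},\cdot)_{\Omega}$, again retaining one difference factor and using the ball bounds. The one genuinely new term is the dissipation $e(\bfw,\bfw,\cdot)$, which depends on the \emph{output} velocity: here $e(\bfw_1,\bfw_1,\cdot)-e(\bfw_2,\bfw_2,\cdot)=e(\delta\bfw,\bfw_1,\cdot)+e(\bfw_2,\delta\bfw,\cdot)$, and since $\bfw_i\in B_R(T^*)$ the embedding \eqref{emb_160} bounds this by $c\,R\,\|\delta\bfw\|_{X^u_{T^*}}$. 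Substituting the already-proved bound on $\|\delta\bfw\|_{X^u_{T^*}}$ and applying Theorem \ref{linear_heat_problem} yields $\|\delta\vartheta\|_{X^\theta_{T^*}}\le \beta(T^*,R)\big(\|\delta\widetilde{\bfw}\|_{X^u_{T^*}}+\|\delta\widetilde{\vartheta}\|_{X^\theta_{T^*}}\big)$ with $\beta(T^*,R)\to 0$ as $T^*\to 0_+$.

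Summing the two estimates gives $\|\mathcal{Z}([\widetilde{\bfw}_1,\widetilde{\vartheta}_1])-\mathcal{Z}([\widetilde{\bfw}_2,\widetilde{\vartheta}_2])\|_{X_{T^*}}\le \Lambda(T^*,R)\,\|[\widetilde{\bfw}_1,\widetilde{\vartheta}_1]-[\widetilde{\bfw}_2,\widetilde{\vartheta}_2]\|_{X_{T^*}}$ with $\Lambda=\alpha+\beta\to 0$ as $T^*\to 0_+$ for fixed $R$; choosing $T^*\le T_1^*$ small enough that also $\Lambda(T^*,R)<1$ secures both properties at once. The main obstacle I anticipate is the coupling induced by $e(\bfw,\bfw,\cdot)$: because it is quadratic in the output velocity rather than in $\widetilde{\bfw}$, the difference estimates cannot be done independently but must be chained — the Stokes estimate for $\delta\bfw$ first, then the heat estimate for $\delta\vartheta$ — and one must check that all constants remain finite in $R$ on the ball while every nonlinear contribution keeps a strictly positive power of $T^*$, which is exactly what forces the Lipschitz constant below $1$ for small $T^*$.
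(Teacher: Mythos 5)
Your proposal is correct and follows essentially the same route as the paper: the self-map property is read off from Lemma \ref{ball_into} exactly as in the paper, and the contraction is obtained by subtracting the two problems, splitting the quadratic terms bilinearly, re-using the estimates \eqref{eq38}--\eqref{eq48} with one factor replaced by the difference, and---crucially, as the paper also does---chaining the Stokes difference estimate $\|\bfw_2-\bfw_1\|_{X^u_{T}}\leq cK(T)\|[\bfz,\sigma]\|_{X_T}$ into the dissipative terms $e(\cdot,\cdot,\cdot)$, which carry no power of $T$ on their own. Your explicit identification of this output-coupling through $e(\bfw,\bfw,\cdot)$ as the step forcing the sequential (Stokes-then-heat) argument matches the paper's elimination of $\|\bfw_2-\bfw_1\|_{X^u_T}$ via \eqref{eq11a}.
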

\begin{corollary}
Lemma \ref{ball_into}, Lemma \ref{ball_contraction} and the Banach
fixed point theorem yield the existence of a fixed-point
$[\bfw,\vartheta]=\mathcal{Z}([\bfw,\vartheta])$ in the ball
$B_R(T^*)$ for sufficiently small $T^* \in (0,T]$. By Remark
\ref{subst_hom_problem} the function
$[\bfu,\theta]=[\bfw+\unu,\vartheta+\theta_0+g]$ is the strong
solution to the system \eqref{eq3}--\eqref{init_temp} on $(0,T^*)$.
\end{corollary}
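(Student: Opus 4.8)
The plan is to read the statement as an immediate consequence of the two preceding lemmas, with the only substantive work being the interpretation of the resulting fixed point. First I would note that $B_R(T^*)$ is nonempty (it contains $[\bfzero,0]$), bounded and norm-closed in the reflexive Banach space $X_{T^*}$, hence a complete metric space for the metric induced by $\|\cdot\|_{X_{T^*}}$. Lemma \ref{ball_into} provides the bound $\|\mathcal{Z}([\widetilde{\bfw},\widetilde{\vartheta}])\|_{X_{T^*}}\leq c_1\xi_1(T^*)+c_2\xi_2(T^*)\omega(R)$, whose right-hand side drops below $R$ once $T^*$ is small; this is exactly the self-mapping half of Lemma \ref{ball_contraction}, which in addition furnishes the contraction property on $B_R(T^*)$. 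The Banach fixed point theorem then yields a unique $[\bfw,\vartheta]\in B_R(T^*)$ with $\mathcal{Z}([\bfw,\vartheta])=[\bfw,\vartheta]$.

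Next I would verify that this fixed point solves the nonlinear homogeneous problem of Remark \ref{subst_hom_problem}. Substituting $[\widetilde{\bfw},\widetilde{\vartheta}]=[\bfw,\vartheta]$ into \eqref{linear_problem_1} and using that $b(\cdot,\cdot,\bfv)$ is linear in each of its first two arguments gives
\begin{equation*}
b(\unu,\unu,\bfv)+b(\unu,\bfw,\bfv)+b(\bfw,\unu,\bfv)+b(\bfw,\bfw,\bfv)=b(\bfw+\unu,\bfw+\unu,\bfv),
\end{equation*}
while $a_u(\bfw,\bfv)+a_u(\unu,\bfv)=a_u(\bfw+\unu,\bfv)$ and $(\bff(\theta_0+g),\bfv)+(\bff\vartheta,\bfv)=(\bff(\vartheta+\theta_0+g),\bfv)$; after regrouping, \eqref{linear_problem_1} becomes precisely \eqref{var_form_2a}. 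The same reassembly of the dissipation terms $e(\bfw,\bfw,\cdot)$ and $e(\bfw,\unu,\cdot)$, the convective terms $d(\bfw,\cdot,\cdot)$ and the adiabatic products in \eqref{linear_problem_2} reproduces the heat equation of Remark \ref{subst_hom_problem}, and \eqref{linear_problem_3}--\eqref{linear_problem_4} coincide with the homogeneous initial data there. Hence $[\bfw,\vartheta]$ solves the nonlinear homogeneous system.

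Finally I would undo the shift of Remark \ref{subst_hom_problem}: setting $[\bfu,\theta]=[\bfw+\unu,\vartheta+\theta_0+g]$ converts the two identities just obtained back into \eqref{weak_NS}--\eqref{weak_heat} and the initial data into \eqref{weak_init_vel}--\eqref{weak_init_temp}. The regularity claimed in Theorem \ref{theorem_main_result} is inherited: from $[\bfw,\vartheta]\in X_{T^*}$ one has $\bfw\in L^2(0,T^*;\mathcal{D}_u)\cap L^\infty(0,T^*;\mathbf{V}_{u}^{1,2})$, $\bfw_t\in L^2(0,T^*;\mathbf{V}_{u}^{0,2})$ and the analogues for $\vartheta$, and adding the fixed data $\unu\in\mathcal{D}_u$, $\theta_0\in\mathcal{D}_{\theta}$, $g\in L^2(0,T;W^{2,2}(\Omega))$ with $g_t\in L^2(0,T;L^2(\Omega))$ keeps the pair in exactly the spaces required, so $[\bfu,\theta]$ is a strong solution on $(0,T^*)$. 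Since the contraction and self-mapping are already in hand, no analytic obstacle remains; the one place to be careful is the bookkeeping in the previous paragraph, where every $b$-, $e$-, $d$- and adiabatic term must regroup exactly onto the nonlinear problem, a missed or mis-signed term being the only way the identification could fail.
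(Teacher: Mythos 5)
Your proposal is correct and takes essentially the same route the paper intends: the paper states this corollary without a separate proof precisely because it is the direct combination of Lemma \ref{ball_into}, Lemma \ref{ball_contraction} and the Banach fixed point theorem on the complete metric space $B_R(T^*)$, followed by the term-by-term regrouping that undoes the shift of Remark \ref{subst_hom_problem}. Your explicit checks (completeness of the closed ball, the bilinear regrouping of the $a$-, $b$-, $d$-, $e$- and adiabatic terms, and the inherited regularity of $[\bfw+\unu,\vartheta+\theta_0+g]$) merely fill in routine details the paper leaves implicit, and your closing caution about mis-signed terms is well placed, since the paper's displays \eqref{weak_heat}, \eqref{var_form_2a} and \eqref{linear_problem_2} contain minor sign inconsistencies in the adiabatic and dissipative terms that the intended regrouping silently corrects.
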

\begin{proof}[Proof of Lemma \ref{ball_contraction}]
Fix $R>0$ and choose $\tau\in(0,T]$ sufficiently small, such that
(cf. \eqref{est_Z})
$$
c_1\xi_1(s)+c_2\xi_2(s)\omega(R)\leq R \quad \forall s \in (0,\tau].
$$
Now we conclude that for every positive $T^*$, $0<T^*\leq \tau$,
$\mathcal{Z}$ maps $B_R(T^*)$ into $B_R(T^*)$.

Now we prove contraction. Let
$[\widetilde{\bfw}_1,\widetilde{\vartheta}_1],
[\widetilde{\bfw}_2,\widetilde{\vartheta}_2]\in X_{T}$, ${\bfz}=
\widetilde{\bfw}_2 - \widetilde{\bfw}_1$ and
$\sigma=\widetilde{\vartheta}_2-\widetilde{\vartheta}_1$. Theorem
\ref{linear_heat_problem} and Theorem \ref{stokes_problem} imply the
estimate
\begin{multline}\label{eq_contr_01}
\|\mathcal{Z}([\widetilde{\bfw}_2,\widetilde{\vartheta}_2])
-\mathcal{Z}([\widetilde{\bfw}_1,\widetilde{\vartheta}_1])\|_{X_T}
\leq c(\Omega) \left(
\|b(\bfz,\unu,\cdot)\|_{L^2(0,T;\mathbf{V}_{u}^{0,2})} \right.
\\
+\|b(\unu,\bfz,\cdot)\|_{L^2(0,T;\mathbf{V}_{u}^{0,2})}
+\|b(\widetilde{\bfw}_2,\bfz,\cdot)\|_{L^2(0,T;\mathbf{V}_{u}^{0,2})}
+\|b(\bfz,\widetilde{\bfw}_1,\cdot)\|_{L^2(0,T;\mathbf{V}_{u}^{0,2})}
\\
+\|(\bff \sigma,\cdot)\|_{L^2(0,T;\mathbf{V}_{u}^{0,2})}
+\|d(\bfz,\widetilde{\vartheta}_2,\cdot)\|_{L^2(0,T;{V}_{\theta}^{0,2})}
+\|d(\widetilde{\bfw}_1,\sigma,\cdot)\|_{L^2(0,T;{V}_{\theta}^{0,2})}
\\
+\|d(\bfz,\theta_0+g,\cdot)\|_{L^2(0,T;{V}_{\theta}^{0,2})}
+\|d(\unu,\sigma,\cdot)\|_{L^2(0,T;{V}_{\theta}^{0,2})}
\\
+\|e(\bfw_2-\bfw_1,\unu,\cdot)\|_{L^2(0,T;{V}_{\theta}^{0,2})}
\|e(\bfw_2-\bfw_1,\bfw_2,\cdot)\|_{L^2(0,T;{V}_{\theta}^{0,2})}
\\
\|e(\bfw_1,\bfw_2-\bfw_1,\cdot)\|_{L^2(0,T;{V}_{\theta}^{0,2})}
+\|(\sigma\bff\cdot\unu,\cdot)_{\Omega}\|_{L^2(0,T;{V}_{\theta}^{0,2})}
\\
+\|((\theta_0+g)\bff\cdot\bfz,\cdot)_{\Omega}\|_{L^2(0,T;{V}_{\theta}^{0,2})}
+\|(\sigma\bff\cdot\widetilde{\bfw}_1,\cdot)_{\Omega}\|_{L^2(0,T;{V}_{\theta}^{0,2})}
\\
\left.   \|
(\widetilde{\vartheta}_2\bff\cdot\bfz,\cdot)_{\Omega}\|_{L^2(0,T;{V}_{\theta}^{0,2})}
\right).
\end{multline}
Following \eqref{eq37a}--\eqref{eq44} we have (recall that
$c=c(\Omega)$)
\begin{eqnarray}
\|b(\bfz,\unu,\cdot)\|_{L^2(0,T;\mathbf{V}_{u}^{0,2})} &\leq& c \,
T^{1/4}\|\unu\|_{\mathcal{D}_u}\|\bfz\|_{{X}^u_{T}},
\label{eq_contr_10}
\\
\|b(\unu,\bfz,\cdot)\|_{L^2(0,T;\mathbf{V}_{u}^{0,2})} &\leq& c \,
T^{1/4}\|\unu\|_{\mathcal{D}_u}\|\bfz\|_{{X}^u_{T}},
\label{eq_contr_11}
\\
\|b(\widetilde{\bfw}_2,\bfz,\cdot)\|_{L^2(0,T;\mathbf{V}_{u}^{0,2})}
&\leq& c \,
T^{1/4}\|\widetilde{\bfw}_2\|_{{X}^u_{T}}\|\bfz\|_{{X}^u_{T}},
\label{eq_contr_12}
\\
\|b(\bfz,\widetilde{\bfw}_1,\cdot)\|_{L^2(0,T;\mathbf{V}_{u}^{0,2})}
&\leq& c \,
T^{1/4}\|\bfz\|_{{X}^u_{T}}\|\widetilde{\bfw}_1\|_{{X}^u_{T}},
\label{eq_contr_13}
\\
\|(\bff \sigma,\cdot)\|_{L^2(0,T;\mathbf{V}_{u}^{0,2})} &\leq&  c
T^{\mu/[(2+\mu)(4+\mu)]}\|\bff\|_{L^{2+\mu}(0,T;\mathbf{L}^{2})},
\|\sigma\|_{{X}^{\theta}_{T}}, \label{eq_contr_14}
\\
\|d(\bfz,\widetilde{\vartheta}_2,\cdot)\|_{L^2(0,T;{V}_{\theta}^{0,2})}
&\leq& c \,
T^{1/4}\|\bfz\|_{{X}^{u}_{T}}\|\widetilde{\vartheta}_2\|_{{X}^{\theta}_{T}},
\label{eq_contr_15}
\\
\|d(\widetilde{\bfw}_1,\sigma,\cdot)\|_{L^2(0,T;{V}_{\theta}^{0,2})}
&\leq& c \,
T^{1/4}\|\widetilde{\bfw}_1\|_{{X}^{u}_{T}}\|\sigma\|_{{X}^{\theta}_{T}},
\label{eq_contr_16}
\\
\|d(\bfz,\theta_0+g,\cdot)\|_{L^2(0,T;{V}_{\theta}^{0,2})} &\leq& c
\, T^{1/4}\|\bfz\|_{{X}^{u}_{T}}\|\theta_0+g\|_{{X}^{\theta}_{T}},
\label{eq_contr_17}
\\
\|d(\unu,\sigma,\cdot)\|_{L^2(0,T;{V}_{\theta}^{0,2})} &\leq&  c \,
T^{1/4}\|\unu\|_{\mathcal{D}_u}\|\sigma\|_{{X}^{\theta}_{T}}
\label{eq_contr_18} .
\end{eqnarray}
Estimating the dissipative terms we arrive at
\begin{eqnarray}
\|e(\bfw_2-\bfw_1,\unu,\cdot)\|_{L^2(0,T;{V}_{\theta}^{0,2})} &\leq&
c \,\|\bfw_2-\bfw_1\|_{{X}^u_{T}}\|\unu\|_{\mathcal{D}_u},
\label{eq_contr_19}
\\
\|e(\bfw_2-\bfw_1,\bfw_2,\cdot)\|_{L^2(0,T;{V}_{\theta}^{0,2})}
&\leq&  c \,\|\bfw_2-\bfw_1\|_{{X}^u_{T}}\|\bfw_2\|_{{X}^u_{T}},
\label{eq_contr_20}
\\
\|e(\bfw_1,\bfw_2-\bfw_1,\cdot)\|_{L^2(0,T;{V}_{\theta}^{0,2})}
&\leq&  c \,\|\bfw_1\|_{{X}^u_{T}}\|\bfw_2-\bfw_1\|_{{X}^u_{T}},
\label{eq_contr_21}.
\end{eqnarray}
Following Theorem \ref{stokes_problem} we can eliminate the term
$\|\bfw_2-\bfw_1\|_{{X}^u_{T}}$ in
\eqref{eq_contr_19}--\eqref{eq_contr_21} using the energy-like
estimate (cf. \eqref{eq11a})
\begin{multline}
\|\bfw_2-\bfw_1\|_{{X}^u_{T}} \leq c(\Omega) \left(
\|b(\bfz,\unu,\cdot)\|_{L^2(0,T;\mathbf{V}_{u}^{0,2})}
+\|b(\unu,\bfz,\cdot)\|_{L^2(0,T;\mathbf{V}_{u}^{0,2})} \right.
\\\left.
+\|b(\widetilde{\bfw}_2,\bfz,\cdot)\|_{L^2(0,T;\mathbf{V}_{u}^{0,2})}
+\|b(\bfz,\widetilde{\bfw}_1,\cdot)\|_{L^2(0,T;\mathbf{V}_{u}^{0,2})}
+\|(\bff \sigma,\cdot)\|_{L^2(0,T;\mathbf{V}_{u}^{0,2})}
 \right)
\end{multline}
and then apply the inequalities
\eqref{eq_contr_10}--\eqref{eq_contr_14}. Finally, the adiabatic
terms can be estimated in the following way:
\begin{equation}\label{eq_contr_22}
\|(\sigma\bff\cdot\unu,\cdot)_{\Omega}\|_{L^2(0,T;{V}_{\theta}^{0,2})}
\leq c T^{\mu/[(2+\mu)(4+\mu)]}
 \|\sigma\|_{{X}^{\theta}_{T}}
 \|\bff\|_{L^{2+\mu}(0,T;\mathbf{L}^{2})}
 \|\unu\|_{{X}^{u}_{T}},
\end{equation}
\begin{multline}\label{eq_contr_23}
\|((\theta_0+g)\bff\cdot\bfz,\cdot)_{\Omega}\|_{L^2(0,T;{V}_{\theta}^{0,2})}
\\
\leq c T^{\mu/[(2+\mu)(4+\mu)]}
 \|\theta_0+g\|_{{X}^{\theta}_{T}}
 \|\bff\|_{L^{2+\mu}(0,T;\mathbf{L}^{2})}
 \|\bfz\|_{{X}^{u}_{T}},
\end{multline}
\begin{multline}\label{eq_contr_24}
\|
(\sigma\bff\cdot\widetilde{\bfw}_1,\cdot)_{\Omega}\|_{L^2(0,T;{V}_{\theta}^{0,2})}
\\
\leq c T^{\mu/[(2+\mu)(4+\mu)]}
 \|\sigma\|_{{X}^{\theta}_{T}}
 \|\bff\|_{L^{2+\mu}(0,T;\mathbf{L}^{2})}
 \|\widetilde{\bfw}_1\|_{{X}^{u}_{T}}
\end{multline}
and
\begin{multline}\label{eq_contr_25}
\|
(\widetilde{\vartheta}_2\bff\cdot\bfz,\cdot)_{\Omega}\|_{L^2(0,T;{V}_{\theta}^{0,2})}
\\
\leq c T^{\mu/[(2+\mu)(4+\mu)]}
 \|\widetilde{\vartheta}_2\|_{{X}^{\theta}_{T}}
 \|\bff\|_{L^{2+\mu}(0,T;\mathbf{L}^{2})}
 \|\bfz\|_{{X}^{u}_{T}}  .
\end{multline}
Now let $[\widetilde{\bfw}_1,\widetilde{\vartheta}_1],
[\widetilde{\bfw}_2,\widetilde{\vartheta}_2]\in B_R(T) \subset
X_{T}$. Inequality \eqref{eq_contr_01} and the estimates
\eqref{eq_contr_10}--\eqref{eq_contr_25} yield
\begin{equation}
\|\mathcal{Z}([\widetilde{\bfw}_2,\widetilde{\vartheta}_2])
-\mathcal{Z}([\widetilde{\bfw}_1,\widetilde{\vartheta}_1])\|_{X_T}
\leq c K(T)\|[\widetilde{\bfw}_2,\widetilde{\vartheta}_2]
-[\widetilde{\bfw}_1,\widetilde{\vartheta}_1]\|_{X_T},
\end{equation}
where the positive function $K(T)\rightarrow 0_+$ for $T\rightarrow
0_+$ and $c$ does not depend on $T$. Hence taking $T^*\in (0,T]$
sufficiently small, the contraction of $\mathcal{Z}:B_R(T^*)
\rightarrow B_R(T^*)$ can be easily established. The proof of Lemma
\ref{ball_contraction} is complete.
\end{proof}


\subsection{Global uniqueness of the strong solutions}
\label{sec_uniqueness}
Here we prove the global uniqueness of the strong solution stated in
the main result. Suppose that all assumptions of Theorem
\ref{theorem_main_result} are satisfied and there are two strong
solutions $[\bfu_1,\theta_1],[\bfu_2,\theta_2]$ of
\eqref{eq3}--\eqref{init_temp} on $(0,T)$. Denote
$\bfz=\bfu_1-\bfu_2\in X^u_T$ and $\sigma=\theta_1 - \theta_2\in
X^{\theta}_T$. Then $\bfz$ and $\sigma$ satisfy the equations
\begin{eqnarray}
(\bfz_t,\bfv) + a_u(\bfz,\bfv)
+b(\bfz,\bfu_2,\bfv)+b(\bfu_1,\bfz,\bfv)-(\bff \sigma,\bfv)&=&0,
\\
(\sigma_t,\varphi)_{\Omega}+ a_{\theta}(\sigma,\varphi)+
d(\bfz,\theta_1,\varphi) +
d(\bfu_2,\sigma,\varphi)-e(\bfz,\bfu_1,\varphi)
\nonumber\\
 - e(\bfu_2,\bfz,\varphi)
-(\sigma\bff\cdot\bfu_1,\varphi)_{\Omega}-(\theta_2\bff\cdot\bfz,\varphi)_{\Omega}
 &=& 0
\end{eqnarray}
for every $[\bfv,\varphi]\in \mathbf{V}_{u}^{1,2} \times
V_{\theta}^{1,2}$ and for almost every $t\in(0,T)$ and
$\bfz(0)={\bf0}$ and $\sigma(0)=0$. Hence substituting $\bfv=\bfz$
and $\varphi=\sigma$ we obtain estimates
\begin{eqnarray}\label{est101}
\frac{1}{2}\frac{{\rm d}}{{\rm d}
t}\|\bfz(t)\|^2_{\mathbf{V}_{u}^{0,2}} +
\|\bfz(t)\|^2_{\mathbf{V}_{u}^{1,2}} &\leq&
|b(\bfu_1(t),\bfz(t),\bfz(t))| \nonumber
\\
&&+|b(\bfz(t),\bfu_2(t),\bfz(t))| +|(\bff(t) \sigma(t),\bfz(t))|
\nonumber
\\
\end{eqnarray}
and
\begin{eqnarray}\label{est102}
\frac{1}{2}\frac{{\rm d}}{{\rm d}
t}\|\sigma(t)\|^2_{V_{\theta}^{0,2}} +
\|\sigma(t)\|^2_{V_{\theta}^{1,2}} &\leq&
|d(\bfz(t),\theta_1(t),\sigma(t))| +
|d(\bfu_2(t),\sigma(t),\sigma(t))|
\nonumber\\
&& +|e(\bfz(t),\bfu_1(t),\sigma(t))|
+|e(\bfu_2(t),\bfz(t),\sigma(t))|
\nonumber\\
&& +|(\sigma(t)\bff(t)\cdot\bfu_1(t),\sigma(t))_{\Omega}|
\nonumber\\
&&+|(\theta_2(t)\bff(t)\cdot\bfz(t),\sigma(t))_{\Omega}|
\end{eqnarray}
for a.e. $t \in (0,T)$.
To estimate term by term on the right-hand sides of \eqref{est101}
and \eqref{est102} the Gagliardo--Nirenberg interpolation
inequalities (cf. \cite[Theorem 5.8]{AdamsFournier1992})
\begin{eqnarray*}
\|\bfz(t)\|_{\mathbf{L}^4} &\leq& c\,
\|\bfz(t)\|^{1/2}_{\mathbf{W}^{1,2}}
\|\bfz(t)\|^{1/2}_{\mathbf{L}^2},
\\
\|\sigma(t)\|_{L^4(\Omega)} &\leq& c\,
\|\sigma(t)\|^{1/2}_{W^{1,2}(\Omega)}
\|\sigma(t)\|^{1/2}_{L^2(\Omega)}
\end{eqnarray*}
and the well-known Young's inequality with parameter $\delta$
$$
ab \leq \delta a^p + C(\delta)b^q \qquad (a,b>0, \;
\delta>0,\;1<p,q<\infty,\,1/p + 1/q = 1)
$$
for $C(\delta)=(\delta p)^{-q/p}q^{-1}$, will be frequently used.

Estimating the right-hand side of the inequality \eqref{est101} we
arrive at
\begin{eqnarray}\label{est103}
|b(\bfu_1(t),\bfz(t),\bfz(t))|&\leq&\|\bfu_1(t)\|_{\mathbf{L}^4}
\|\bfz(t)\|_{\mathbf{W}^{1,2}} \|\bfz(t)\|_{\mathbf{L}^{4}}
\nonumber\\
&\leq& c \|\bfu_1(t)\|_{\mathbf{L}^4}
\|\bfz(t)\|^{3/2}_{\mathbf{W}^{1,2}}
\|\bfz(t)\|^{1/2}_{\mathbf{L}^{2}}
\nonumber\\
&\leq& \delta\|\bfz(t)\|^{2}_{\mathbf{W}^{1,2}}
+C(\delta)\|\bfu_1(t)\|^4_{\mathbf{L}^4}\|\bfz(t)\|^{2}_{\mathbf{L}^{2}},
\end{eqnarray}
\begin{eqnarray}\label{est104}
|b(\bfz(t),\bfu_2(t),\bfz(t))| &\leq&
\|\bfu_2(t)\|_{\mathbf{W}^{1,2}} \|\bfz(t)\|^2_{\mathbf{L}^{4}}
\nonumber\\
 &\leq&
c\|\bfu_2(t)\|_{\mathbf{W}^{1,2}} \|\bfz(t)\|_{\mathbf{W}^{1,2}}
\|\bfz(t)\|_{\mathbf{L}^{2}}
\nonumber\\
&\leq& \delta\|\bfz(t)\|^{2}_{\mathbf{W}^{1,2}}
+C(\delta)\|\bfu_2(t)\|^2_{\mathbf{W}^{1,2}}\|\bfz(t)\|^{2}_{\mathbf{L}^{2}}
\end{eqnarray}
and
\begin{multline}\label{est105}
|(\bff(t) \sigma(t),\bfz(t))|  \leq \|\bff(t)\|_{\mathbf{L}^2}
\|\sigma(t)\|_{L^4(\Omega)}\|\bfz(t)\|_{\mathbf{L}^4}
\\
\leq c\|\bff(t)\|_{\mathbf{L}^2} \|\sigma(t)\|^{1/2}_{L^2(\Omega)}
\|\sigma(t)\|^{1/2}_{W^{1,2}(\Omega)}\|\bfz(t)\|^{1/2}_{\mathbf{L}^2}
\|\bfz(t)\|^{1/2}_{\mathbf{W}^{1,2}}
\\
\leq \delta\left( \|\sigma(t)\|^{2}_{W^{1,2}(\Omega)}
+\|\bfz(t)\|^{2}_{\mathbf{W}^{1,2}} \right)
+C(\delta)\|\bff(t)\|^2_{\mathbf{L}^2}\left(
\|\sigma(t)\|^{2}_{L^2(\Omega)} +\|\bfz(t)\|^{2}_{\mathbf{L}^2}
\right).  \\
\end{multline}
Now the estimates \eqref{est101} and \eqref{est103}--\eqref{est105}
imply
\begin{multline}\label{est120}
\frac{1}{2}\frac{{\rm d}}{{\rm d}
t}\|\bfz(t)\|^2_{\mathbf{V}_{u}^{0,2}} +
\|\bfz(t)\|^2_{\mathbf{V}_{u}^{1,2}} \leq  \delta\left(
\|\sigma(t)\|^{2}_{W^{1,2}(\Omega)}
+\|\bfz(t)\|^{2}_{\mathbf{W}^{1,2}} \right)
\\
+C(\delta)\left(\|\bff(t)\|^2_{\mathbf{L}^2}+\|\bfu_1(t)\|^4_{\mathbf{L}^4}
+\|\bfu_2(t)\|^2_{\mathbf{W}^{1,2}}\right)
\|\bfz(t)\|^{2}_{\mathbf{L}^2}
\\
+C(\delta)\|\bff(t)\|^2_{\mathbf{L}^2}
\|\sigma(t)\|^{2}_{L^2(\Omega)}.
\end{multline}
Similarly, let us estimate all terms on the right-hand side of
\eqref{est102}. Successively
\begin{multline}\label{est201}
|d(\bfz(t),\theta_1(t),\sigma(t))| \leq
\|\bfz(t)\|_{\mathbf{L}^4}\|\nabla\theta_1(t)\|_{\mathbf{L}^2}
\|\sigma(t)\|_{L^4(\Omega)}
\\
\leq c\|\theta_1(t)\|_{W^{1,2}(\Omega)}
\|\sigma(t)\|^{1/2}_{L^2(\Omega)}
\|\sigma(t)\|^{1/2}_{W^{1,2}(\Omega)}\|\bfz(t)\|^{1/2}_{\mathbf{L}^2}
\|\bfz(t)\|^{1/2}_{\mathbf{W}^{1,2}}
\\
\leq \delta\left( \|\sigma(t)\|^{2}_{W^{1,2}(\Omega)}
+\|\bfz(t)\|^{2}_{\mathbf{W}^{1,2}} \right)
\\
+C(\delta)\|\theta_1(t)\|^2_{W^{1,2}(\Omega)} \left(
\|\sigma(t)\|^{2}_{L^2(\Omega)} +\|\bfz(t)\|^{2}_{\mathbf{L}^2}
\right)
\end{multline}
and
\begin{eqnarray}\label{est202}
|d(\bfu_2(t),\sigma(t),\sigma(t))| &\leq&
\|\bfu_2(t)\|_{\mathbf{L}^4}\|\nabla\sigma(t)\|_{\mathbf{L}^2}
\|\sigma(t)\|_{L^4(\Omega)} \nonumber
\\
&\leq& c
\|\bfu_2(t)\|_{\mathbf{L}^4}\|\sigma(t)\|^{3/2}_{W^{1,2}(\Omega)}
\|\sigma(t)\|^{1/2}_{L^2(\Omega)} \nonumber
\\
&\leq& \delta \|\sigma(t)\|^{2}_{W^{1,2}(\Omega)}
+C(\delta)\|\bfu_2(t)\|^4_{\mathbf{L}^4}\|\sigma(t)\|^{2}_{L^2(\Omega)}.
\nonumber
\\
\end{eqnarray}
For the dissipative terms in \eqref{est102} we have
\begin{multline}\label{est203}
|e(\bfz(t),\bfu_1(t),\sigma(t))| \leq c
\|\bfz(t)\|_{\mathbf{W}^{1,2}}\|\bfu_1(t)\|_{\mathbf{W}^{1,4}}
\|\sigma(t)\|_{L^4(\Omega)}
\\
 \leq c \|\bfz(t)\|_{\mathbf{W}^{1,2}}\|\bfu_1(t)\|_{\mathbf{W}^{1,4}}
 \|\sigma(t)\|^{1/2}_{W^{1,2}(\Omega)}
\|\sigma(t)\|^{1/2}_{L^2(\Omega)}
\\
 \leq \delta\|\bfz(t)\|^2_{\mathbf{W}^{1,2}}
 + C(\delta)\|\bfu_1(t)\|^2_{\mathbf{W}^{1,4}}
 \|\sigma(t)\|_{W^{1,2}(\Omega)}
\|\sigma(t)\|_{L^2(\Omega)}
\\
 \leq \delta\left( \|\bfz(t)\|^2_{\mathbf{W}^{1,2}}
 + \|\sigma(t)\|^2_{W^{1,2}(\Omega)} \right)
 + C(\delta)^3\|\bfu_1(t)\|^4_{\mathbf{W}^{1,4}}
\|\sigma(t)\|^2_{L^2(\Omega)}
\\
\end{multline}
and in the same way we deduce the inequality
\begin{multline}\label{est204}
|e(\bfu_2(t),\bfz(t),\sigma(t))|\leq c
\|\bfu_2(t)\|_{\mathbf{W}^{1,4}}\|\bfz(t)\|_{\mathbf{W}^{1,2}}
\|\sigma(t)\|_{L^4(\Omega)}
\\
 \leq \delta\left( \|\bfz(t)\|^2_{\mathbf{W}^{1,2}}
 + \|\sigma(t)\|^2_{W^{1,2}(\Omega)} \right)
 + C(\delta)^3\|\bfu_2(t)\|^4_{\mathbf{W}^{1,4}}
\|\sigma(t)\|^2_{L^2(\Omega)}.
\\
\end{multline}
The last two terms can be estimated as follows:
\begin{multline}\label{est205}
|(\sigma(t)\bff(t)\cdot\bfu_1(t),\sigma(t))_{\Omega}| \leq
\|\bff(t)\|_{\mathbf{L}^{2}}\|\bfu_1(t)
\|_{\mathbf{L}^{\infty}}\|\sigma(t)\|^2_{L^4(\Omega)}
\\
\leq c
\|\bff(t)\|_{\mathbf{L}^{2}}\|\bfu_1(t)\|_{\mathbf{L}^{\infty}}
\|\sigma(t)\|_{W^{1,2}(\Omega)} \|\sigma(t)\|_{L^2(\Omega)}
\\
\leq  \delta \|\sigma(t)\|^2_{W^{1,2}(\Omega)}
+C(\delta)\|\bff(t)\|^2_{\mathbf{L}^{2}}\|\bfu_1(t)\|^2_{\mathbf{L}^{\infty}}
\|\sigma(t)\|^2_{L^2(\Omega)}
\end{multline}
and finally
\begin{multline}\label{est206}
|(\theta_2(t)\bff(t)\cdot\bfz(t),\sigma(t))_{\Omega}|\leq
\|\theta_2(t)\|_{L^{\infty}(\Omega)}\|\bff(t)\|_{\mathbf{L}^{2}}
\|\bfz(t)\|_{\mathbf{L}^{4}}\|\sigma(t)\|_{L^4(\Omega)}
\\
\leq c
\|\theta_2(t)\|_{L^{\infty}(\Omega)}\|\bff(t)\|_{\mathbf{L}^{2}}
\|\bfz(t)\|^{1/2}_{\mathbf{W}^{1,2}}
\|\bfz(t)\|^{1/2}_{\mathbf{L}^{2}}
\|\sigma(t)\|^{1/2}_{W^{1,2}(\Omega)}
\|\sigma(t)\|^{1/2}_{L^2(\Omega)}
\\
\leq  \delta
\|\bfz(t)\|_{\mathbf{W}^{1,2}}\|\sigma(t)\|_{W^{1,2}(\Omega)}
+C(\delta)\|\theta_2(t)\|^2_{L^{\infty}(\Omega)}\|\bff(t)\|^2_{\mathbf{L}^{2}}
\|\bfz(t)\|_{\mathbf{L}^{2}}\|\sigma(t)\|_{L^2(\Omega)}
\\
\leq \frac{\delta}{2}\left( \|\sigma(t)\|^{2}_{W^{1,2}(\Omega)}
+\|\bfz(t)\|^{2}_{\mathbf{W}^{1,2}} \right)
\\
+C(\delta)\|\theta_2(t)\|^2_{L^{\infty}(\Omega)}\|\bff(t)\|^2_{\mathbf{L}^{2}}
\left( \|\sigma(t)\|^{2}_{L^2(\Omega)}
+\|\bfz(t)\|^{2}_{\mathbf{L}^2} \right).
\end{multline}
Now the estimates \eqref{est102} and \eqref{est201}--\eqref{est206}
imply
\begin{multline}\label{est220}
\frac{1}{2}\frac{{\rm d}}{{\rm d}
t}\|\sigma(t)\|^2_{V_{\theta}^{0,2}} +
\|\sigma(t)\|^2_{V_{\theta}^{1,2}} \leq
\delta\left( \|\bfz(t)\|^{2}_{\mathbf{W}^{1,2}} +
\|\sigma(t)\|^{2}_{W^{1,2}(\Omega)}\right)
\\
+C(\delta)\left(\|\theta_1(t)\|^2_{W^{1,2}(\Omega)}
+\|\theta_2(t)\|^2_{L^{\infty}(\Omega)}\|\bff(t)\|^2_{\mathbf{L}^{2}}\right)
\|\bfz(t)\|^{2}_{\mathbf{L}^2}
\\
+C(\delta)\left(\|\theta_1(t)\|^2_{W^{1,2}(\Omega)}
+\|\bfu_2(t)\|^4_{\mathbf{L}^4}+C(\delta)^2\|\bfu_1(t)\|^4_{\mathbf{W}^{1,4}}
+C(\delta)^2\|\bfu_2(t)\|^4_{\mathbf{W}^{1,4}} \right.
\\
\left.
+\|\bff(t)\|^2_{\mathbf{L}^{2}}\|\bfu_1(t)\|^2_{\mathbf{L}^{\infty}}
+\|\theta_2(t)\|^2_{L^{\infty}(\Omega)}\|\bff(t)\|^2_{\mathbf{L}^{2}}
\right) \|\sigma(t)\|^{2}_{L^2(\Omega)}.
\end{multline}
Choosing $\delta$ sufficiently small and summing \eqref{est120} and
\eqref{est220} we conclude that
\begin{equation}\label{}
\frac{{\rm d}}{{\rm d} t} \left(\|\bfz(t)\|^2_{\mathbf{V}_{u}^{0,2}}
+ \|\sigma(t)\|^2_{V_{\theta}^{0,2}} \right) \leq \chi(t)
\left(\|\bfz(t)\|^2_{\mathbf{V}_{u}^{0,2}} +
\|\sigma(t)\|^2_{V_{\theta}^{0,2}} \right),
\end{equation}
where $\chi(t)\in L^1((0,T))$. Now the uniqueness follows from the
fact that $\bfz(0)={\bf0}$ and $\sigma(0)=0$ using the Gronwall
lemma.




\end{document}